\newtheorem{theorem}{Theorem}[section]
\newtheorem{proposition}{Proposition}[section]
\newtheorem{lemma}{Lemma}[section]
\newcommand{\arr}{\mathbb{R}}
\newcommand{\aee}{{\cal A}}
\newcommand{\ree}{{\cal R}}
\newcommand{\gee}{{\cal G}}
\newcommand{\nee}{{\cal N}}
\newcommand{\qed}{\ \hfill{$\Box$}\\[5pt]}
\newcommand{\DP}[1]{{#1}}
\newcommand{\comment}[1]{}
\newcommand{\pad}{\vspace{.07in}}
\newcommand{\fes}{f^{\rm es}}
\newcommand{\fmc}{f^{\rm mc}}
\newcommand{\fgair}{f^\star}
\newcommand{\fsb}{f^{\rm sb}}
\newcommand{\ane}{a^{\rm ne}}
\newcommand{\aopt}{a^{\rm opt}}
\newcommand{\aoptprime}{a^{\rm opt'}}
\newcommand{\card}[1]{|#1|}
\normalsize\title{{Multiagent {Maximum} Coverage Problems: \\ The Trade-off Between Anarchy and Stability}
\thanks{This work is supported by SNSF Grant \#P2EZP2-181618, ONR Grant \#N00014-15-1-2762 and NSF Grant \#ECCS-1351866. The conference version of this work is given in \cite{ram2019}.}}
\author{
Vinod Ramaswamy, Dario Paccagnan, and Jason R. Marden  
\thanks{V. Ramaswamy was with the Department of Electrical, Computer, and Energy Engineering, University of Colorado at Boulder, \texttt{mailtovinodr@gmail.com}.}
\thanks{
D. Paccagnan is with the Department of Mechanical Engineering and with the Center of Control, Dynamical Systems and Computation, University of California, Santa Barbara, \texttt{dariop@ucsb.edu}.}
\thanks{J. R. Marden is with the Department of Electrical and Computer Engineering, University of California, Santa Barbara, \texttt{jrmarden@ece.ucsb.edu}.}
}
\begin{document}

\maketitle

\begin{abstract}
\DP{The price of anarchy and price of stability are three well-studied performance metrics that seek to characterize the inefficiency of equilibria in distributed systems. The distinction between these two performance metrics centers on the equilibria that they focus on: the price of anarchy characterizes the quality of the worst-performing equilibria, while the price of stability characterizes the quality of the best-performing equilibria. 
While much of the literature focuses on these metrics from an analysis perspective, in this work we consider these performance metrics from a design perspective.  Specifically, we focus on the setting where a system operator is tasked with designing local utility functions to optimize these performance metrics in a class of games termed covering games. 
Our main result characterizes a fundamental trade-off between the price of anarchy and price of stability in the form of a fully explicit Pareto frontier. Within this setup, optimizing the price of anarchy comes directly at the expense of the price of stability (and vice versa). 
Our second results demonstrates how a system-operator could  incorporate an additional piece of system-level information into the design of the agents' utility functions to breach these limitations and improve the system's performance. This valuable piece of system-level information pertains to the performance of worst performing agent in the system.}
\end{abstract}

{\color{red}
}
\section{Introduction}
A multiagent system can be characterized by a collection of individual subsystems, each making independent decisions in response to locally available information. Such a decision-making architecture can either emerge naturally as the results of self-interested behavior, e.g., drivers in a transportation network, or be the result of a design choice in engineered system.  In the latter case, the need for distributed decision-making stems from the scale, spatial distribution, and sheer quantity of information associated with various problem domains that exclude the possibility for centralized decision making and control. One concrete example is the problem of monitoring the perimeter of a wild fire, where the goal is to deploy a collection of unmanned aerial vehicles (UAV) to effectively survey the perimeter of a wild fire under the constraint that each UAV makes independent decisions in response to local information regarding its own aerial view and minimal information regarding the state of neighboring UAVs \cite{Alexis2009}.  Alternative examples include the use of robotic networks in post-disaster environments \cite{kuntze2012seneka, kitano1999robocup}, task scheduling and management \cite{ernst2004staff}, water conservative food production \cite{kozai2015plant}, fleets of autonomous vehicles \cite{spieser2014toward}, and micro-scale medical treatments  \cite{servant2015controlled,ishiyama2002magnetic}.

Regardless of the specific problem domain, the central goal in the design of a networked control system is to derive \emph{admissible} control policies for the decision-making entities that ensure the emergent collective behavior is desirable with regards to a given system-level objective. At a high level, this design process entails specifying two key elements: the information available to each subsystem, attained either through sensing or communication, and a decision-making mechanism that prescribes how each subsystem processes available information to take decisions. The quality of a networked control architecture is ultimately gauged by several dimensions including the stability and efficacy  of the emergent collective behavior, characteristics of the transient behavior, in addition to communication costs associated with propagating information throughout the system. \DP{Within this setting, two fundamental questions arise:
\begin{enumerate}
	\item[(i)] What are the decision-making rules that optimize the performance of the emergent collective behavior?%
	\vspace*{1mm}
	\item[(ii)] How does informational availability translate to attainable performance guarantees?%
\end{enumerate} }

This paper seeks to shed light on the answer to these two questions for the class of multiagent maximum coverage problems introduced in \cite{Gairing09}. \DP{
In a multiagent maximum coverage problem we are given a ground set of resources $\ree$, $n$ collections of subsets of the ground set $\aee_1, \dots, \aee_n$ where $\aee_i \subseteq 2^\ree$ for each $i \in \{1,\dots, n\}$, and a valuation $v_r \geq 0$ for each resource $r \in \ree$.  Given the specific covering problem, the system-level objective is to select one set  from each collection, i.e., $a_i \in \aee_i$ for each $i=1, \dots, n$,  so as to maximize the total value of covered elements. It is important to highlight that there are well-established centralized algorithms that can derive an admissible allocation of agents to resources that is within a factor of $1-1/e$ of the optimal allocation's value in polynomial time, provided that the $n$ collections of subsets coincide (i.e. $\aee_i = \aee_j$ for all $i, j$) and further technical assumption hold \cite{nem78-I,4031401,krause07}.} Further, no polynomial time algorithm can provide a better approximation, unless $\mathcal{P}=\mathcal{NP}$. Unfortunately, the applicability of such centralized algorithms for the control of multiagent systems is limited given the concerns highlighted above. 

\DP{This paper focuses on distributed approaches for reaching a near-optimal allocation where the individual agents make their covering selections in response to a \emph{designed} decision-making policy that makes use of locally available information.} The central goal here is to design agent decision-making rules that optimize the quality of the emergent collective behavior for a given level of informational availability. Of specific interest will be identifying how the level of information available to the individual agents impacts the attainable performance guarantees associated with the corresponding optimal networked control system. 

In the spirit of \cite{Gairing09,MardenWierman13}, we approach this problem through a game theoretic lens where we model the individual agents as players in a game and each agent is associated with a local objective function that guides its decision-making process. We treat these local objective functions as our design parameter and focus our analysis on characterizing the performance guarantees associated with the resulting equilibria of the designed game. Here, we model the informational restrictions discussed above as limitions on the amount of information that these local agent objective functions can depend on.  We concentrate our analysis on two well-studied performance metrics in the game theoretic literature termed the \emph{price of anarchy} and \emph{price of stability} \cite{Anshelevich04,Koutsoupias1999}. Informally, the price of anarchy provides performance guarantees associated with  the worst performing equilibrium relative to the optimal allocation. The price of stability, on the other hand, provides similar performance guarantees when restricting attention to the best performing equilibrium. The lack of uniqueness of equilibria implies that these bounds are often quite different.\footnote{The justification for focusing purely on equilibria, as opposed to dynamics, derives from the fact that there is a rich body of literature in distributed learning that, coupled with the derived objective, lead the collective behavior to an equilibrium, c.f., \cite{fudlev}.}

The work of \cite{Gairing09} was one of the first to view price of anarchy as a design objective rather than its more traditional analytical counterpart.  The main results in \cite{Gairing09} identify a set of agent objective functions that optimize the price of anarchy when agents are only aware of (i) the resources the agent can select and (ii) the number of agents covering these resources.  Note that in this setting, any agent $i$ is unaware of the covering options of any other agents $j\neq i$, as well as any resource values that the agent is unable to cover.  Interestingly, \cite{Gairing09} demonstrates that this optimal price of anarchy attains the same $1-1/e$ guarantees of the best centralized algorithms (even when the $n$ collections of subsets are different), meaning that there is no degradation in terms of the worst-case efficiency guarantees when transitioning from the best centralized algorithm to the presented distributed algorithm that adheres to the prescribed informational limitations.

\pad
{\noindent \bf Our Contribution.}
\DP{The first main results of this manuscript addresses the achievable performance guarantees, in terms of the price of anarchy and price of stability, attainable through the design of agent objective functions of the above form in \cite{Gairing09}.    }\DP{In Theorem~\ref{thm:tradeoff}, we characterize the price of anarchy and price of stability frontier that is achievable through the design of agent objective functions in these multiagent covering problem.  This characterization demonstrates a fundamental trade-off between the price of anarchy and price of stability as design objectives in such multiagent covering problems.}  That is, designing agent objective functions to improve the worst-case performance guarantees necessarily degrades the best-case performance guarantees.  As corner cases, we demonstrate that any objective functions that ensure a price of anarchy of $1-1/e$ also inherit a price of stability of $1-1/e$.  Note that having a price of stability smaller than $1$ implies that the optimal allocation is not necessarily an equilibrium.  Alternatively, any objective functions that ensure a price of stability of $1$ also inherit a price of anarchy of at most $1/2$. %

\DP{The second main result of this manuscript demonstrates that one can move beyond this frontier by providing through the design of agent objective functions encompassing additional system-wide information.  } In Theorem~\ref{t4}, we identify a minimal (and easily attainable) piece of system-level information that permits the realization of decision-making rules with performance guarantees beyond the price of anarchy / price of stability frontier provided in Theorem~\ref{thm:tradeoff}. When agents are provided with this additional information, which can be informally interpreted as the largest value of an uncovered resource in the system, one can derive agent objective functions that yield a price of anarchy of $1-1/e$ and a price of stability of $1$, which was unattainable without this piece of information. 

\DP{The importance of this result centers on the fact that specific system-level information, if propagated to the agents, could be exploited in networked control algorithms to improve system performance. Understanding the tradeoff between this improvement and the communication costs necessary to propagate this information throughout the system is clearly an important question that warrants future attention.}

\pad
{\noindent \bf Related Work.}
The results contained in this manuscript add to the growing literature of utility design, which can be interpreted as a subfield of mechanism design \cite{christodoulou2004coordination} where the objective is to design admissible agent objective functions to optimize various performance metrics, such as the price of anarchy and price of stability,  \cite{marden-connections,marsha2013,Korman2018arxiv}. 
While recent work in \cite{Ragavendran14} has identified all design approaches that ensure equilibrium existence in local utility designs, the question of optimizing the worst-case efficiency of the resulting equilibria, i.e., optimizing the price of anarchy, is far less understood. Nonetheless, there are a few positive results in this domain worth reviewing. 
Beyond \cite{Gairing09}, alternative problem domains where optimizing the price of anarchy has been explored include concave cost sharing games \cite{Marden17}, optimal tolling in congestion games \cite{paccagnan2019incentivizing} and reverse carpooling games \cite{Marden12}. More recently, the authors in \cite{paccagnan2018arxiv1,paccagnan2018arxiv2} characterize and optimize the price of anarchy relative to a broader class of submodular and supermodular combinatorial optimization problems, rediscovering \cite{Gairing09} as a special case. Lastly, a recent result in \cite{Lazos2018arxiv} characterizes a similar trade-off between the price of anarchy and price of \mbox{stability in a mechanism design setting.} 

\DP{Much of the research regarding optimal utility design has concentrated on a specific class of objectives, termed \emph{budget-balanced} objectives, which imposes the constraint that the sum of the agents' objectives is equal to the system welfare for every allocation. Within the confines of budget-balanced agent objective functions, several works have identified the optimality of the Shapley value objective design with regards to the price of anarchy guarantees \cite{optCostSharing,facilityLocationGames,chen}.  However, the imposition of the budget-balanced constraint is unwarranted in the context of multiagent system design and its removal allows for improved performance, \mbox{as shown in \cite{Gairing09} and this manuscript.}}

\pad
\DP{{\noindent \bf Organization.} In section \ref{sec:modelmetrics} we introduce the multiagent coverage problem and corresponding performance metrics. In section \ref{sec:prelim} we provide some preliminary analysis on the performance of specific distribution rules. In section \ref{sec:tradeoff} we present the trade-off between price of anarchy and stability. Lastly, in section \ref{sec:additionalinfo} we show how to breach this trade-off by leveraging an additional piece of system-level information.}

\pad
{\noindent \bf Notation.} 
We use $\mathbb{N}$, $\mathbb{R}_{>0}$ and $\mathbb{R}_{\ge0}$ to denote the set of natural, positive and non negative real numbers; $e$ is Euler's number.

\section{Model and Performance Metrics}
\label{sec:modelmetrics}
In this section we introduce the multiagent maximum coverage problem and our game theoretic model for the design of local decision-making mechanisms \cite{Gairing09}.  Further, we define the objectives and performance metrics of interest, as well as provide a review of the relevant literature.  
\subsection{Covering problems}

Let $\ree = \{r_1, r_2, \dots, r_m\}$ be a finite set of resources where each resource $r \in \ree$ is associated with a value $v_r \geq 0$ defining its importance.  
We consider a covering problem where the goal is to allocate a collection of agents $N = \{1, \dots, n\}$ to resources in $\ree$ in order to maximize the cumulative value of the covered resources. The set of possible assignments for each agent $i\in N$ is given by $\aee_i \subseteq 2^\ree$ and we express an allocation by the tuple $a=(a_1, a_2, \dots, a_n) \in \aee=\aee_1 \times \dots \times \aee_n$. \DP{The agents' assignment sets $\aee_1, \dots, \aee_n$ are fixed, and need not conform to any specific structure or symmetry.} The total value, or welfare, associated with an allocation \mbox{$a \in \aee$ is given by}
\begin{equation}
W(a)= \sum_{r \in \cup_{i \in N} a_i} v_r.
\end{equation}
\DP{The goal of the covering problem is to find an optimal allocation, i.e., an   allocation $a^{\rm opt}\in \aee$ that satisfies
	\begin{equation}
	a^{\rm opt} \in \underset{a \in \aee}{\arg \max} \ W(a),
	\end{equation} 
where we restrict attention to admissible allocations $a \in \aee$.  
We will express an allocation $a$ as $(a_i, a_{-i})$ with the understanding that $a_{-i}=(a_1, \dots, a_{i-1}, a_{i+1}, \dots, a_n)$ denotes the collection of choices of the agents other than agent~$i$.  Lastly, we will periodically restrict our attention to the special case of single-selection covering problems where each allocation $a_i \in \aee_i$ consists of a single resource.  Figure~\ref{fig:examples} provides some illustrative examples of single-selection covering problems.  }

\begin{figure}[t] 
	\begin{center}
		\includegraphics[width=0.99\columnwidth]{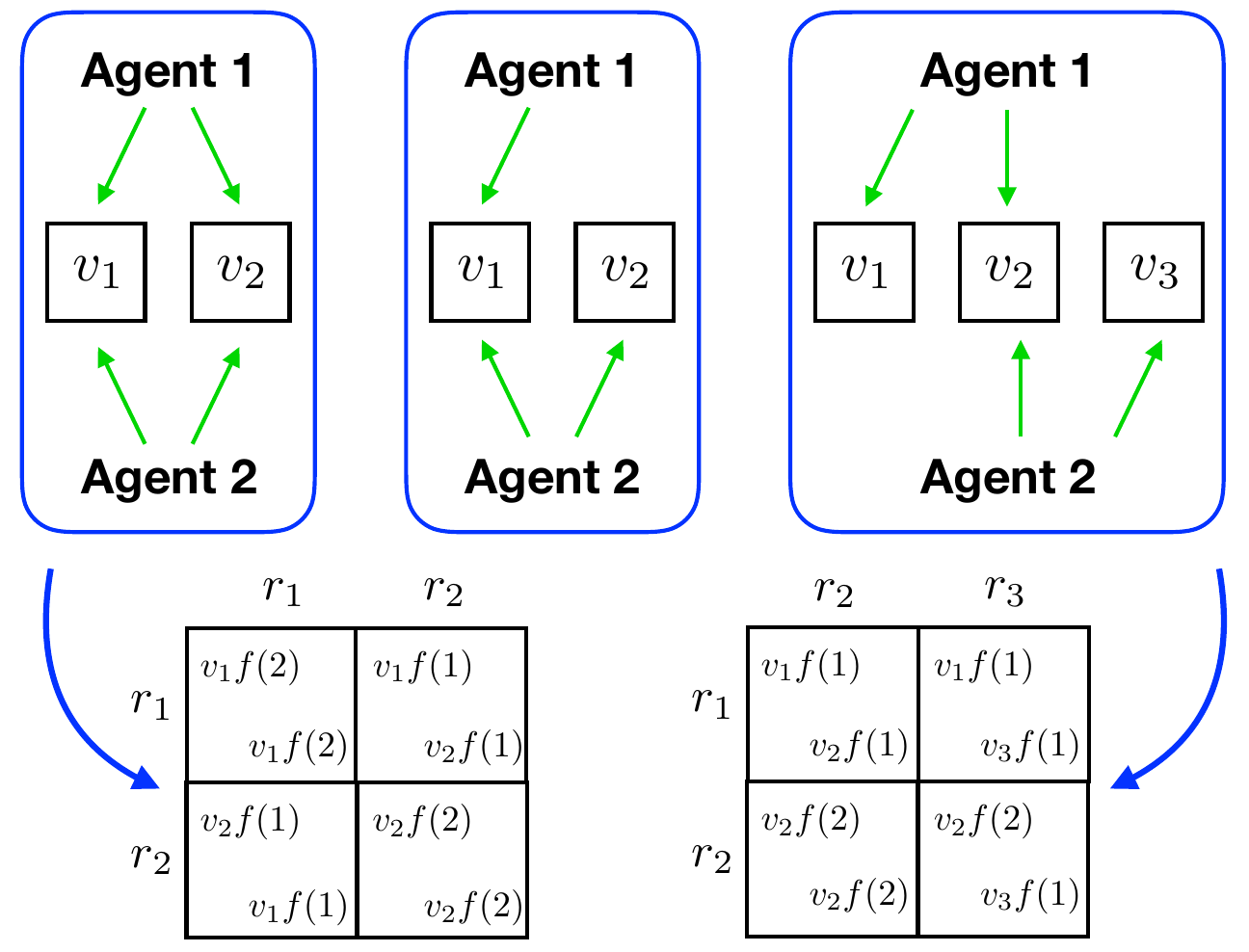}
		\caption{ \DP{The figure provides an illustration of three different single-selection covering problems with $N=\{1,2\}$. For the covering problem on the left there are two resources with valuations $v_1, v_2$. Further, the action sets for both agents are symmetric and of the form $\aee_1 =\aee_2 = \{\{r_1\}, \{r_2\}\}$; hence, either agent can select either resource. 
			The covering problem in the middle is identical to that on the left with the exception being that the allocation choice of agent $1$ is now of the form $\aee_1 = \{\{r_1\}\}$, meaning that agent $1$ is now only capable of choosing resource $r_1$.  
			For the covering problem on the right there are three resources with valuations $v_1, v_2, v_3$. Further, the action sets for both agents are not symmetric and of the form $\aee_1 =\{\{r_1\}, \{r_2\}\}$ and $\aee_2 =\{\{r_2\}, \{r_3\}\}$; hence, agent $1$ is not able to select resource $r_3$ and similarly agent $2$ is not able to select resource $r_1$.  It is important to highlight that the structure of the agents' choice sets is given, i.e., not a design choice, and the goal is to find the optimal allocation within the given admissible choice set. In general, the allocation choices of an agent need not be restricted to singletons as highlighted above. 
			Lastly, the two payoff matrices highlight the utility function given in \eqref{eq:ut} of both agents for the left and right scenarios respectively. Agent $1$ is the Row player, Agent $2$ is the Column player, and the numbers in each quadrant correspond to their payoffs for the given joint action.  Here, the top left entry is agent 1's payoff and the bottom right entry is agent 2's payoff.  Note that the utilities are expressed with regards to the distribution rule $f$. }  }
		\label{fig:examples}
	\end{center}
\end{figure}

\subsection{A game theoretic model}

This paper focuses on deriving distributed mechanisms for attaining near optimal solutions to covering problem where the individual agents make independent choices in response to local available information. Specifically, in this section we assume that each agent $i$ has information only regarding the resources that the agent can select. Rather than directly specifying a decision-making process, here we focus on the design of local agent objective functions that adhere to these informational dependencies and will ultimately be used to guide the agents' selection process. To that end, we consider the framework proposed in \cite{Gairing09} where each agent is associated with a local utility or objective function $U_i : \mathcal{A} \rightarrow \arr$, and for any allocation $a = (a_i, a_{-i}) \in \aee$, the utility of agent $i$ is 
\begin{equation}\label{eq:ut}
U_i(a_i, a_{-i}) = \sum_{r \in a_i} v_r \cdot f(|a|_r),
\end{equation} 
\DP{where $|a|_r$ captures the number of agents that choose resource $r$ in the allocation $a$, i.e., the cardinality of the set $\{ i\in N: a_i=r\}$, and $f: \mathbb{N} \rightarrow \arr$ defines the fractional benefit of the resource valuation awarded to each agent for selecting a given resource in an allocation $a$.  In order to evaluate this utility, it is important to highlight that each agent $i$ is required to observe (i) the resources in its action set and (ii) the number of other agents covering each of these resources. We will refer to $f$ as the \emph{distribution rule} throughout.}\footnote{\DP{The structure of utility functions considered in \eqref{eq:ut} is common in the literature and falls under the framework of networked cost-sharing games or distributed welfare games, c.f., \cite{MardenWierman13}. This paper focuses on characterizing the optimal performance guarantees attainable with this utility structure.  Once such guarantees are known, one can evaluate whether or not it is desirable to raise or lower the informational dependence on the utility structure.}}  We will express such a $n$-agent welfare sharing game by the tuple $G^n=\left\{N, \ree, \{\aee_i\}_{i\in N}, f, \{v_r\}_{r \in \ree} \right\}$ and drop the subscripts on the above sets, e.g., denote $\{v_r\}_{r \in \ree}$ as $ \{v_r\}$, for brevity.

The goal of this paper is to derive the distribution rule $f$ that optimizes the performance of the emergent collective behavior.  Here, we focus on the concept of pure Nash equilibrium as a model for this emergent collective behavior \cite{Nash50}.  A pure Nash equilibrium, which we will henceforth refer to as just an equilibrium, is defined as an allocation $a^{\rm ne} \in \aee$ such that for all $i \in N$ and for all $a_i\in\aee_i$, we have %
\[
U_i(a_i^{\rm ne}, a_{-i}^{\rm ne}) \ge U_i(a_i, a_{-i}^{\rm ne}).
\]
In essence, an equilibrium represents an allocation for which no single agent has a unilateral incentive to alter its covering choice given the choices of the other agents.
\DP{
It is important to highlight that an equilibrium might not exist in a general game; however, when restricting attention to a covering game as defined above, an equilibrium is guaranteed to exist as the resulting game is known to be a congestion game \cite{Monderer96}.}\footnote{There is a rich body of literature that provides distributed algorithms that can coordinate the agents to an equilibrium for the class of games considered in this paper \cite{you,you2,fudlev,marden05-jsfp-journal,marsha2013}. However, we will not discuss such results due to space considerations.}  

We will measure the efficiency of an equilibrium allocation in a game $G^n$ through two commonly studied measures, termed \emph{price of anarchy} and \emph{price of stability}, defined as
\[
\begin{split}
{\rm PoA}(G^n) &= \min_{a^{\rm ne} \in {\rm NE}(G^n)} \frac{W(a^{\rm ne})}{W(a^{\rm opt})}\le 1, \\[2mm]
{\rm PoS}(G^n) &= \max_{a^{\rm ne} \in {\rm NE}(G^n)}  \frac{W(a^{\rm ne})}{W(a^{\rm opt})}\le 1, 
\end{split}
\]
where we use ${\rm NE}(G^n)$ to denote the set of Nash equilibria of $G^n$. In words, the price of anarchy characterizes the performance of the worst equilibrium of $G^n$ relative to the performance of the optimal allocation, while the price of stability focuses on the best equilibrium in the game $G^n$. Such distinction is required as equilibria are guaranteed to exists for the class of utilities considered in \eqref{eq:ut}, but in general they are not unique.
By definition $0\le {\rm PoA}(G^n) \le {\rm PoS}(G^n) \le 1 $.

\DP{Throughout, we require that a system designer commits to a distribution rule \emph{without} explicit knowledge of particular covering problem.  To that end, the system operator will be aware of the maximum number of agents in the game $n$, but will have no specific knowledge of actual number of agents, the resource set $\ree$ and valuations $\{v_r\}$, as well as the agents' action sets $\{\aee_i\}$. For example, all of the three examples highlighted in Figure~\ref{fig:examples} represent instances of covering problems with at most two agents, and when designing $f$ a system designer has no a priori knowledge about which specific instance will be realized.  Note that once a particular distribution rule $f$ has been chosen, this distribution rule defines a game for \emph{any} realization of the parameters. The objective of the system designer is to provide desirable performance guarantees irrespective of the realization of these parameter, even if they where chosen by an adversary. To that end, let $\gee_f^n$ denote the family of covering agent games with at most $n$ agents that utilize a given distribution rule $f$, i.e., any game $G \in \gee_f^n$ is of the above form.}  We will measure the quality of a distribution rule $f$ by a worst-case analysis over the set of induced games $\gee_f^n$, which is the natural extension of the price of anarchy and price of stability defined above, i.e., 
\begin{eqnarray}\label{eq:poa2}
{\rm PoA}(\gee_f^n) &=& \min_{G^n \in \gee_f^n}  {\rm PoA}(G^n),  \\[2mm]
{\rm PoS}(\gee_f^n) &=& \min_{G^n \in \gee_f^n}  {\rm PoS}(G^n).  \label{eq:pos3}
\end{eqnarray}
The price of anarchy ${\rm PoA}(\gee_f^n)$ for a given distribution rule $f$ provides a bound on the quality of any equilibrium irrespective of the agent set $N$, resource set $\ree$, action sets $\{\aee_i\}$, and resource valuations $\{v_r\}$. The price of stability, on the other hand, provides similar performance guarantees when restricting attention to the best equilibrium.\footnote{One motivation for studying the price of stability is the availability of distributed learning rules that lead the collective behavior to the best equilibrium, e.g., \cite{marden-shamma-08,blume93,blume97}.}

\DP{
\section{Preliminary Results}\label{sec:prelim}
In this section we provide a preliminary study on the price of anarchy and price of stability for three well-studied distribution rules. The first distribution rule that we focus on is the \emph{equal share distribution rule}, which is defined as 
\begin{equation}
\label{eq:fes}
\fes(j)=\frac{1}{j}, \quad \forall j \geq 1.
\end{equation}
The following proposition provides the price of anarchy and price of stability guarantees associated with the equal share distribution rule.\\
\begin{proposition}\label{claim:eq-share}
	Consider the class of maximum coverage games $\gee_{\fes}^n$ with the equal share distribution rule $\fes$ and at most $n$ agents.  The price of anarchy and price of stability over the set of games $\gee_{\fes}^n$ satisfies
	\begin{eqnarray*}%
		\lim_{n \rightarrow \infty} \ {\rm PoS}(\gee_{\fes}^n) & = & 1/2, \\
		\lim_{n \rightarrow \infty} \ {\rm PoA}(\gee_{\fes}^n) & = & 1/2.
	\end{eqnarray*}
\end{proposition}
}

\DP{Before delving into the proof of Proposition~\ref{claim:eq-share}, we begin by presenting a lemma for characterizing the price of anarchy for any distribution rule
$f$ in the class of distribution rules $\mathcal{F}^n=\left\{f\in\mathbb{R}^n \right\}$.  This result strengthens the results presented in \cite{paccagnan2018arxiv2,Gairing09},  which focus purely on set of positive distribution rules that satisfies the constraint $f(1)=1$, i.e., $\mathcal{F}^n_1=\left\{f\in\mathbb{R}^n_{\ge0} : f(1)=1\right\}.$   }

\DP{
\begin{lemma}\label{t1}
	 Let $n \geq 2$. The following properties hold: \\ (i)  The price of anarchy associated with any distribution rule $f \in {\cal F}^n_1$ is
		\begin{equation}\label{eq:poa-guarantee}
		{\rm PoA}(\gee_f^n) = \frac{1}{1+\chi_f^n},
		\end{equation}
		where 
		\begin{equation}
		\label{eq:chi}
		\chi_f^n  \hspace{-0.05cm}= \hspace{-.15cm}\max_{j\le n-1 } \hspace{-.1cm} \left\{(j+1)f(j+1)-1, jf(j)-f(j+1), jf(j+1) \right\}.
		\end{equation}
	\\ (ii) If $f(1) \leq 0$ or $f(k) < 0$ for any $k\geq 2$, the price of anarchy satisfies ${\rm PoA}(\gee_f^n) = 0$. \\
	(iii) The optimal price of anarchy over the set of distribution rules ${\cal F}^n$ is given by
	\begin{equation}\label{eq:opt-poa}
	\max_{f \in {\cal F}^n}{{\rm PoA}(\gee_f^n)} = 1 - \frac{1}{\frac{1}{(n-1)(n-1)!} + \sum_{i=0}^{n-1} \frac{1}{i!}},
	\end{equation}
	which satisfies 
	\begin{equation}\label{eq:opt-poa3}
	\lim_{n \rightarrow \infty} \max_{f \in {\cal F}^n}{{\rm PoA}(\gee_f^n)} = 1 - 1/e.
	\end{equation}
	\\ (iv) Lastly, a distribution rule $f^\star \in {\cal F}^n$ that achieves the optimal price of anarchy in \eqref{eq:opt-poa} is unique and of the form
	\begin{equation}\label{eq:fgair}
	\fgair(j) = (j-1)! \left(\frac{ \frac{1}{(n-1)(n-1)!} + \sum_{i=j}^{n-1} \frac{1}{i!}}{\frac{1}{(n-1)(n-1)!} + \sum_{i=1}^{n-1} \frac{1}{i!}} \right), \ j\leq n.
	\end{equation}
\end{lemma}}

\DP{We will refer to the optimal distribution rule given in \eqref{eq:fgair} as the Gairing distribution rule. %
}

\DP{
\emph{Proof of Lemma \ref{t1}}
	 Part (i) merely restates the results given \cite{paccagnan2018arxiv2,Gairing09}, c.f., Theorem~2 in \cite {paccagnan2018arxiv2}. We will prove Part (ii) of this lemma by constructing a specific game instance where the price of anarchy is $0$, hence implying that the price of anarchy over the class of games $\gee_f^n$ is also $0$. Suppose $f(1)=0$ and consider a single-selection maximum coverage problem with three resources $\{r_1, r_2, r_3\}$ and valuations $v_1$, $v_2$, and $v_3 = 0$.  Furthermore, suppose $\aee_1 = \{\{r_1\}, \{r_2\}\}$ and $\aee_j = \{\{r_3\}\}$ for all agents $j \in \{2, \dots, n\}$. For any resource valuations $v_1 > v_2$, the action profile $(r_2, r_3, \dots, r_3)$ is an equilibrium while the action profile $(r_1, r_3, \dots, r_3)$ is the optimal allocation.  Since $v_1$ can be arbitrarily large relative to $v_2$, this completes the proof.  A similar class of examples can be constructed for the case where $f(k) < 0$ for some $k \geq 2$. For Part (iii), consider the family of distribution rules $\mathcal{F}^n_+=\left\{f\in\mathbb{R}^n_{\ge0} : f(1)>0\right\}.$ Building upon Part (ii), we know that   
	 \begin{equation}
	 \max_{f \in {\cal F}^n}{{\rm PoA}(\gee_f^n)} = \max_{f \in {\cal F}^n_+}{{\rm PoA}(\gee_f^n)}  = \max_{f \in {\cal F}^n_1}{{\rm PoA}(\gee_f^n)},
	 \end{equation}
	 where the second equality comes from the fact that the price of anarchy is invariant to scaling, i.e., ${\rm PoA}(\gee_f^n) = {\rm PoA}(\gee_{\alpha \cdot f}^n)$ for any $f \in {\cal F}^n$ and $\alpha > 0$. Part (iv) follows from the results given \cite{paccagnan2018arxiv2,Gairing09} coupled with the results in Part~(iii), i.e., there is no loss in restricting attention to distribution rules to ${\cal F}^n_1$. $\hfill \Box$
}

\DP{We will now proceed with the proof of Proposition~\ref{claim:eq-share}.}

\DP{\emph{Proof of Proposition~\ref{claim:eq-share}:} We will start with the price of anarchy result. First,  note that $\fes \in \mathcal{F}^n_1$. Using Lemma~\ref{t1}, the price of anarchy is ${\rm PoA}(\gee_{\fes}^n) = \frac{1}{1+\chi_{\fes}^n}$ where  
\begin{equation}
\label{eq:chi2}
\chi_{\fes}^n  \hspace{-0.05cm}= \hspace{-.15cm}\max_{j\le n-1 } \hspace{-.1cm} \left\{0, 1-\frac{1}{j+1}, \frac{j}{j+1} \right\} = \frac{n-1}{n}.
\end{equation}
Accordingly, we have that 
\begin{eqnarray*}%
	\lim_{n \rightarrow \infty} \ {\rm PoA}(\gee_{\fes}^n) = 
	\lim_{n \rightarrow \infty} \ \frac{1}{1+\frac{n-1}{n}} = 1/2.
\end{eqnarray*}
Moving to the price of stability, we know that ${\rm PoS}(\gee_{\fes}^n) \geq {\rm PoA}(\gee_{\fes}^n)$ by definition.  We will now construct a specific example with a price of stability equal to the price of anarchy which provides the equality.  To that end, consider a game with $n$ agents and $n+1$ resources denoted by $\{r_1, r_2, \dots, r_{n+1}\}$, where the action set of each agent $i \in N$ is $\aee_i = \{ \{r_i\}, \{r_{n+1}\}\}$.  Further, suppose the valuations of the resources satisfies $v_1 = \dots = v_n = 1$ and $v_{n+1}=n$.  If is straightforward to verify that the unique equilibrium is when all agents choose resource $r_{n+1}$, which yields a welfare of $W(a^{\rm ne})=n$.  An optimal allocation is each agent $i \in \{1,\dots, n-1
\}$ select resource $r_i$ while agent $n$ chooses resource $r_{n+1}$.  The welfare associated with the optimal allocation satisfies $W(a^{\rm opt})=n+(n-1)$.  Hence, the price of stability over the class of games must satisfy
\begin{eqnarray*}%
	\lim_{n \rightarrow \infty} \ {\rm PoS}(\gee_{\fes}^n) \leq 
	\lim_{n \rightarrow \infty} \ \frac{n}{n+{n-1}} = 1/2,
\end{eqnarray*}
which gives us the result since ${\rm PoA}(\gee_{\fes}^n)\le {\rm PoS}(\gee_{\fes}^n)$.  Note that the above analysis also ensures that ${\rm PoS}(\gee_{\fes}^n) = {\rm PoA}(\gee_{\fes}^n) = \frac{n}{2n-1}$ for any $n$.  $\hfill \Box$}\\

\DP{The second distribution rule that we focus on is the \emph{marginal contribution distribution rule}
\begin{equation}
\label{eq:fmc}
\fmc(j)=\left\{ \begin{array}{lcl} 1 & & \text{if} \ j=1, \\ 0 & & \text{otherwise} \end{array} \right.
\end{equation}
The following proposition provides the price of anarchy and price of stability guarantees associated with the marginal contribution distribution rule.\\
\begin{proposition}\label{claim:mc-share}
	Consider the class of maximum coverage games $\gee_{\fmc}^n$ with the marginal cost distribution rule $\fmc$ and at most $n$ agents.  The price of anarchy and price of stability over the set of games $\gee_{\fmc}^n$ satisfies
	\begin{eqnarray*}%
		\lim_{n \rightarrow \infty} \ {\rm PoS}(\gee_{\fmc}^n) & = & 1, \\
		\lim_{n \rightarrow \infty} \ {\rm PoA}(\gee_{\fmc}^n) & = & 1/2.
	\end{eqnarray*}
\end{proposition}}

\DP{\emph{Proof of Proposition~\ref{claim:mc-share}:} We will start with the price of anarchy result. Using Lemma~\ref{t1}, the price of anarchy is ${\rm PoA}(\gee_{\fmc}^n) = \frac{1}{1+\chi_{\fmc}^n}$ where $\chi_{\fmc}^n=1$, i.e., the maximum of \eqref{eq:chi} is attained for $j=1$.  Accordingly, the price of anarchy is ${\rm PoA}(\gee_{\fmc}^n) = 1/2$.  With regards to the price of stability, it is well-known that the price of stability satisfies ${\rm PoS}(\gee_{\fmc}^n) = 1$. This follows from \cite{marsha2013} as the resulting game is a potential game with potential function coinciding with the system welfare $W$. The fact that the potential function coincides with $W$ can be easily seen upon substituting the expression for $\fmc$ in the potential in \eqref{eq:potential}.  This completes the proof.$\hfill \Box$
}

\DP{The last distribution rule that we focus on is the Gairing distribution rule given in \eqref{eq:fgair}.
	\begin{proposition}\label{claim:gair-share}
		Consider the class of maximum coverage games $\gee_{\fgair}^n$ with the Gairing distribution rule $\fgair$ defined in \eqref{eq:fgair} and at most $n$ agents.  The price of anarchy and price of stability over the set of games $\gee_{\fgair}^n$ satisfies
		\begin{eqnarray*}%
			\lim_{n \rightarrow \infty} \ {\rm PoS}(\gee_{\fgair}^n) & = & 1-1/e, \\
			\lim_{n \rightarrow \infty} \ {\rm PoA}(\gee_{\fgair}^n) & = & 1-1/e.
		\end{eqnarray*}
\end{proposition}}

\DP{\emph{Proof of Proposition~\ref{claim:gair-share}:} The price of anarchy result comes directly from Lemma~\ref{t1}. With regards to the price of stability result, consider the game instance provided in the proof of Proposition~\ref{claim:eq-share} where $v_1 = \dots = v_n = 1$ and $v_{n+1}=1/\fgair(n)$. The structure of the unique equilibrium and optimal allocation are identical to those presented above, and the resulting welfare satisfies $W(\ane)=1/\fgair(n)$ and $W(\aopt)=(n-1)+1/\fgair(n)$.  Expanding out we have
	\begin{eqnarray}
		{\rm PoS}(\gee_{\fgair}^n) &\le & \frac{W(\ane)}{W(\aopt)} \\ 
		&=& \frac{1}{1+(n-1)\fgair(n)} \\
		&=& 1 - \frac{1}{{\frac{1}{(n-1)(n-1)!} + \sum_{i=0}^{n-1} \frac{1}{i!}}}.
	\end{eqnarray}
Accordingly, $	\lim_{n \rightarrow \infty} \ {\rm PoS}(\gee_{\fgair}^n) \le 1-1/e$. Nevertheless, by definition of price of anarchy and of price of stability we have $\lim_{n \rightarrow \infty} {\rm PoA}(\gee_{\fgair}^n)\le \lim_{n \rightarrow \infty}{\rm PoS}(\gee_{\fgair}^n)$, which coupled with $\lim_{n \rightarrow \infty} {\rm PoA}(\gee_{\fgair}^n)=1-1/e$ and  $\lim_{n \rightarrow \infty} {\rm Pos}(\gee_{\fgair}^n)\le 1-1/e$ completes the proof. $\hfill \Box$}\\

\DP{The section provides an initial analysis on the performance guarantees associated with three distinct distribution rules (see the right panel on Figure \ref{table:distribution-values} for a summary). First, note that there is never an incentive to utilize the equal share distribution rule as the marginal contribution distribution rule achieves the same price of anarchy guarantees while also ensuring a better price of stability ($1$ as opposed to $1/2$).  The comparison between the marginal contribution distribution rule and the Gairing distribution rule is not as straightforward.  If the goal is to have a better price of anarchy, then the Gairing distribution rule is the better choice and one inherits a suboptimal price of stability ($1-1/e$ as opposed to $1$). If the goal is to have a better price of stability, then the marginal contribution distribution rule is the better choice and one inherits a suboptimal price of anarchy ($1/2$ as opposed to $1-1/e$).  Whether or not there are alternative distribution yields that achieve a more desirable balance between these two efficiency guarantees is the focus of the ensuing sections.} 
\begin{figure}[h!] 
	\DP{\begin{center}	
		\begin{tabular}[b]{c|c|c|c}
			\!\!$j$\!&\!$\fes(j)$\!&\!$\fmc(j)$\!&\!$\fgair(j)$\!\!\\\hline
			1 & 1 & 1 & 1\\[-0.5mm]
			2 & 1/2 & 0 & 0.418\\[-0.5mm]
			3 & 1/3 & 0 & 0.254\\[-0.5mm]
			4 & 1/4 & 0 & 0.180\\[-0.5mm]
			5 & 1/5 & 0 & 0.139\\[-0.5mm]
			6 & 1/6 & 0 & 0.113\\[-0.5mm]
			7 & 1/7 & 0 & 0.095\\[-0.5mm]
			8 & 1/8 & 0 & 0.082\\[-0.5mm]
			9 & 1/9 & 0 & 0.072\\[-0.5mm]
			10 & 1/10 & 0 & 0.065
		\end{tabular}
%
%
\begin{tikzpicture}
\pgfplotsset{every major tick/.append style={black}}
\begin{axis}[%
width=3cm,
height=3cm,
at={(1.011111in,0.641667in)},
scale only axis,
axis equal image,
xmin=0,
xmax=1.05,
ymin=0,
ymax=1.05,
xlabel={{\rm PoA}},
y label style={at={(axis description cs:0.2,.45)},anchor=south},
ylabel={{\rm PoS}},
ylabel style = {align=right},
legend style={draw=none},
]
\addplot[mark = o, mark size = 2pt, mark options={solid}, line width = 1pt]
  table[row sep=crcr]
  {0.5	0.5\\};
%
\addplot[mark = x, mark size = 3pt, mark options={solid}, line width = 1pt]
  table[row sep=crcr]
  {0.6321205588	0.6321205588\\};
%
\addplot[mark = triangle, mark size = 3pt, mark options={solid}, line width = 1pt]
  table[row sep=crcr]
  {0.5	1\\};
\node[anchor=west] at (axis cs:.3, 0.88){{$\fmc$}};
\node[anchor=west] at (axis cs:.3, 0.38){{$\fes$}};
\node[anchor=west] at (axis cs:.6, 0.5){{$\fgair$}};
\end{axis}
%
%
%
\end{tikzpicture}
		\label{table:distribution-values}
	\end{center}}
	\vspace*{-8mm}
\end{figure}

\section{\DP{The Trade-off Between the Price of Stability and Price of Anarchy}}\label{sec:tradeoff}
\DP{The goal of this paper is to investigate the design of distribution rules that optimize the metrics introduced in \eqref{eq:poa2} and \eqref{eq:pos3}. The distribution rules highlighted in the previous section hint to the fact that there may be a fundamental trade-off between the price of anarchy and price of stability.  That is, optimizing the price of anarchy necessarily comes at the expense of the price of stability (and vice versa).  This section provides a characterization of the precise trade-off between these two important performance measures. 
\subsection{Characterizing the Trade-off} }

In this section we provide our first main result that characterizes the inherent tension between the price of anarchy and price of stability as design objectives in multiagent maximum coverage problems. \DP{For the remainder of the manuscript, we let ${\cal F}$ denote the set of functions $f:{\mathbb N} \rightarrow \arr$ where we set $f(1) = 1$, which is without loss of generality.  Further, let $\gee_f = \lim_{n \rightarrow \infty} \gee_f^n$ capture all maximum coverage games with an arbitrary number of players and a given distribution rule $f \in {\cal F}$. First, note that for any $f \in {\cal F}$ and any $n \geq 1$, the set of games $\gee_f^n$ is well defined, even though certain elements of $f$ might not utilized. Second, for any $f \in {\cal F}$ and $n \geq 1$, the price of anarchy (and similarly the price of stability) satisfies ${\rm PoA}(\gee_f) \leq {\rm PoA}(\gee^n_f)$.}

\begin{theorem}\label{thm:tradeoff}
	Consider the class of maximum coverage games introduced in Section~\ref{sec:modelmetrics}. The following holds:
	\begin{itemize}
		\item [(i)] The optimal price of anarchy satisfies 
		\begin{equation}\label{eq:t1}
		\max_{f\in {\cal F}} \ {\rm PoA}(\gee_f) = 1-1/e.
		\end{equation}
		\item[(ii)] Given a desired price of anarchy $\alpha \in [0,1/2]$, the best attainable price of stability satisfies  
		\begin{equation}
		\underset{f \in {\cal F} : {\rm PoA} (\gee_f)\ge\alpha}{\max} \ {\rm PoS}(\gee_f) =1.
		\end{equation}
		\item[(iii)] Given a desired price of anarchy $\alpha \in (1/2, 1-1/e]$ and $n\geq2$, the best attainable price of stability satisfies
		\begin{equation}\label{eq:t3}
		\underset{f \in {\cal F} : {\rm PoA} (\gee_f^n)\ge\alpha}{\max} \ {\rm PoS}(\gee_f^n) \leq  Z(\alpha,n),
		\end{equation}
		where $Z(\alpha,n)$ equals
		\begin{equation}\label{eq:t2}
		 \frac{1}{1+ 
			{\underset{1\le j \le n-1}{\max} \ j\,j!\left(1 - \left(\frac{1}{\alpha}-1\right) \sum_{i=1}^j \frac{1}{i!}\right)}}.
		\end{equation}
		The bound in \eqref{eq:t3} is satisfied with equality if we restrict attention to single-selection maximum coverage games. 

		\item[(iv)] Given a desired price of anarchy $\alpha = 1 -1/e$, the best attainable price of stability satisfies  
		\begin{equation}
		\underset{f \in {\cal F} : {\rm PoA} (\gee_f)\ge 1-1/e}{\max}\ {\rm PoS}(\gee_f) = 1-1/e.
		\label{eq:posinequality}
		\end{equation}
	\end{itemize} 
\end{theorem}

\DP{Theorem~\ref{thm:tradeoff} confirms the intuition gleaned in the previous section that there is indeed a trade-off between the price of anarchy (performance guarantees associated with the worst performing equilibrium) and price of stability (performance guarantees associated with the best performing equilibrium). Hence, there is an inherent tension between these two measures of efficiency as improving the performance of the worst equilibria necessarily comes at the expense of the performance of the best equilibria, and vice versa. The explicit trade-off given in  Theorem~\ref{thm:tradeoff} is illustrated in Figure \ref{fig:tradeoff}. In particular, Theorem~\ref{thm:tradeoff} establishes that there does not exist a distribution rule $f$ that attains joint price of stability and price of anarchy in the red region of the figure.  The expression of $Z(\alpha,n)$ defines this trade-off and satisfies $\lim_{n\rightarrow \infty} Z(1-1/e,n) = 1-1/e$. The last result demonstrates that this trade-off curve is tight in the context of single-selection maximum coverage games, i.e., there are distribution rules $f$ that achieve joint price of anarchy and price of stability guarantees highlighted on the curve $Z(\alpha,n)$. The specific distribution rules will be characterized in the proof of Theorem~\ref{thm:tradeoff}. }

\begin{figure}[t!] 
	\begin{center}
	\newlength\figureheight 
	\newlength\figurewidth
	\setlength\figureheight{6cm} 
	\setlength\figurewidth{6cm} 
%
%
\begin{tikzpicture}
\pgfplotsset{every major tick/.append style={black}}
\begin{axis}[%
width=\figurewidth,
height=\figureheight,
at={(1.011111in,0.641667in)},
scale only axis,
axis equal image,
xmin=0,
xmax=1,
ymin=0,
ymax=1,
xlabel={Worst case performance ({\rm PoA})},
ylabel={Best case performance ({\rm PoS})},
]

\draw[->,thick] (axis cs:.45, 0.77) -- (axis cs:.54, 0.82);
\draw[->,thick] (axis cs:.45, 0.63) -- (axis cs:.6, 0.63);
\draw[->,thick] (axis cs:.45, 0.935) -- (axis cs:.49, 0.98);
\node[anchor=west] at (axis cs:.24, 0.77){{\small $Z(\alpha,n)$}};
\node[anchor=west] at (axis cs:.3, 0.63){{$\fgair$}};
\node[anchor=west] at (axis cs:.3, 0.9){{$\fmc$}};
\addplot[area legend, solid, color=gray!50, fill=gray!50,forget plot]
table[row sep=crcr] {%
x	y\\
0	0\\
1	1\\
1	0\\
}--cycle;

\addplot[area legend, color=red!40, fill=red!40, forget plot]
table[row sep=crcr] {%
x	y\\
0.5	1\\
0.501	0.996023856858847\\
0.502	0.992094861660079\\
0.503	0.988212180746562\\
0.504	0.984375\\
0.505	0.980582524271845\\
0.506	0.976833976833977\\
0.507	0.973128598848369\\
0.508	0.969465648854962\\
0.509	0.96584440227704\\
0.51	0.962264150943396\\
0.511	0.958724202626642\\
0.512	0.955223880597015\\
0.513	0.951762523191095\\
0.514	0.948339483394834\\
0.515	0.944954128440367\\
0.516	0.941605839416058\\
0.517	0.938294010889292\\
0.518	0.935018050541516\\
0.519	0.931777378815081\\
0.52	0.928571428571428\\
0.521	0.925399644760213\\
0.522	0.92226148409894\\
0.523	0.919156414762742\\
0.524	0.916083916083916\\
0.525	0.91304347826087\\
0.526	0.910034602076124\\
0.527	0.907056798623064\\
0.528	0.904109589041096\\
0.529	0.901192504258944\\
0.53	0.898305084745763\\
0.531	0.895446880269814\\
0.532	0.89261744966443\\
0.533	0.889816360601002\\
0.534	0.887043189368771\\
0.535	0.884297520661157\\
0.536	0.881578947368421\\
0.537	0.878887070376432\\
0.538	0.876221498371335\\
0.539	0.873581847649919\\
0.54	0.870967741935484\\
0.541	0.868378812199037\\
0.542	0.865814696485623\\
0.543	0.863275039745628\\
0.544	0.860759493670886\\
0.545	0.858267716535433\\
0.546	0.855799373040752\\
0.547	0.853354134165366\\
0.548	0.850931677018633\\
0.549	0.848531684698609\\
0.55	0.846153846153846\\
0.551	0.843797856049005\\
0.552	0.841463414634146\\
0.553	0.839150227617602\\
0.554	0.836858006042296\\
0.555	0.834586466165413\\
0.556	0.832335329341317\\
0.557	0.8301043219076\\
0.558	0.827893175074184\\
0.559	0.825701624815362\\
0.56	0.823529411764706\\
0.561	0.821376281112738\\
0.562	0.819241982507289\\
0.563	0.817126269956459\\
0.564	0.815028901734104\\
0.565	0.81294964028777\\
0.566	0.810888252148997\\
0.567	0.808844507845934\\
0.568	0.806818181818182\\
0.569	0.804809052333805\\
0.57	0.802816901408451\\
0.571	0.800841514726508\\
0.572	0.798882681564246\\
0.573	0.796940194714882\\
0.574	0.795013850415513\\
0.575	0.793103448275862\\
0.576	0.791208791208791\\
0.577	0.789329685362517\\
0.578	0.787465940054496\\
0.579	0.78561736770692\\
0.58	0.783783783783784\\
0.581	0.781965006729475\\
0.582	0.780160857908847\\
0.583	0.778371161548732\\
0.584	0.776595744680851\\
0.585	0.774834437086093\\
0.586	0.773087071240106\\
0.587	0.771353482260184\\
0.588	0.769633507853403\\
0.589	0.767926988265971\\
0.59	0.766233766233766\\
0.591	0.764553686934023\\
0.592	0.762886597938144\\
0.593	0.761232349165597\\
0.594	0.759590792838875\\
0.595	0.75796178343949\\
0.596	0.756345177664975\\
0.597	0.754740834386852\\
0.598	0.753148614609572\\
0.599	0.751568381430364\\
0.6	0.75\\
0.601	0.748443337484433\\
0.602	0.746898263027295\\
0.603	0.745364647713226\\
0.604	0.74384236453202\\
0.605	0.742331288343558\\
0.606	0.740831295843521\\
0.607	0.739342265529842\\
0.608	0.737864077669903\\
0.609	0.73639661426844\\
0.61	0.734939759036145\\
0.611	0.733493397358944\\
0.612	0.732057416267943\\
0.613	0.730631704410012\\
0.614	0.729216152019002\\
0.615	0.727810650887574\\
0.616	0.726415094339623\\
0.617	0.72502937720329\\
0.618	0.723653395784543\\
0.619	0.722287047841307\\
0.62	0.72093023255814\\
0.621	0.719582850521437\\
0.622	0.71824480369515\\
0.623	0.716915995397008\\
0.624	0.715596330275229\\
0.625	0.714285714285714\\
0.626	0.706546275395034\\
0.627	0.698996655518395\\
0.628	0.691629955947137\\
0.629	0.684439608269859\\
0.63	0.677419354838709\\
0.631	0.670563230605739\\
0.632120558828558	0.632120558828558\\
1	1\\
}--cycle;
\addplot[color=black,solid,line width=1pt]
  table[row sep=crcr]{%
0   0.998\\
0.5	0.998\\
0.501	0.996023856858847\\
0.502	0.992094861660079\\
0.503	0.988212180746562\\
0.504	0.984375\\
0.505	0.980582524271845\\
0.506	0.976833976833977\\
0.507	0.973128598848369\\
0.508	0.969465648854962\\
0.509	0.96584440227704\\
0.51	0.962264150943396\\
0.511	0.958724202626642\\
0.512	0.955223880597015\\
0.513	0.951762523191095\\
0.514	0.948339483394834\\
0.515	0.944954128440367\\
0.516	0.941605839416058\\
0.517	0.938294010889292\\
0.518	0.935018050541516\\
0.519	0.931777378815081\\
0.52	0.928571428571428\\
0.521	0.925399644760213\\
0.522	0.92226148409894\\
0.523	0.919156414762742\\
0.524	0.916083916083916\\
0.525	0.91304347826087\\
0.526	0.910034602076124\\
0.527	0.907056798623064\\
0.528	0.904109589041096\\
0.529	0.901192504258944\\
0.53	0.898305084745763\\
0.531	0.895446880269814\\
0.532	0.89261744966443\\
0.533	0.889816360601002\\
0.534	0.887043189368771\\
0.535	0.884297520661157\\
0.536	0.881578947368421\\
0.537	0.878887070376432\\
0.538	0.876221498371335\\
0.539	0.873581847649919\\
0.54	0.870967741935484\\
0.541	0.868378812199037\\
0.542	0.865814696485623\\
0.543	0.863275039745628\\
0.544	0.860759493670886\\
0.545	0.858267716535433\\
0.546	0.855799373040752\\
0.547	0.853354134165366\\
0.548	0.850931677018633\\
0.549	0.848531684698609\\
0.55	0.846153846153846\\
0.551	0.843797856049005\\
0.552	0.841463414634146\\
0.553	0.839150227617602\\
0.554	0.836858006042296\\
0.555	0.834586466165413\\
0.556	0.832335329341317\\
0.557	0.8301043219076\\
0.558	0.827893175074184\\
0.559	0.825701624815362\\
0.56	0.823529411764706\\
0.561	0.821376281112738\\
0.562	0.819241982507289\\
0.563	0.817126269956459\\
0.564	0.815028901734104\\
0.565	0.81294964028777\\
0.566	0.810888252148997\\
0.567	0.808844507845934\\
0.568	0.806818181818182\\
0.569	0.804809052333805\\
0.57	0.802816901408451\\
0.571	0.800841514726508\\
0.572	0.798882681564246\\
0.573	0.796940194714882\\
0.574	0.795013850415513\\
0.575	0.793103448275862\\
0.576	0.791208791208791\\
0.577	0.789329685362517\\
0.578	0.787465940054496\\
0.579	0.78561736770692\\
0.58	0.783783783783784\\
0.581	0.781965006729475\\
0.582	0.780160857908847\\
0.583	0.778371161548732\\
0.584	0.776595744680851\\
0.585	0.774834437086093\\
0.586	0.773087071240106\\
0.587	0.771353482260184\\
0.588	0.769633507853403\\
0.589	0.767926988265971\\
0.59	0.766233766233766\\
0.591	0.764553686934023\\
0.592	0.762886597938144\\
0.593	0.761232349165597\\
0.594	0.759590792838875\\
0.595	0.75796178343949\\
0.596	0.756345177664975\\
0.597	0.754740834386852\\
0.598	0.753148614609572\\
0.599	0.751568381430364\\
0.6	0.75\\
0.601	0.748443337484433\\
0.602	0.746898263027295\\
0.603	0.745364647713226\\
0.604	0.74384236453202\\
0.605	0.742331288343558\\
0.606	0.740831295843521\\
0.607	0.739342265529842\\
0.608	0.737864077669903\\
0.609	0.73639661426844\\
0.61	0.734939759036145\\
0.611	0.733493397358944\\
0.612	0.732057416267943\\
0.613	0.730631704410012\\
0.614	0.729216152019002\\
0.615	0.727810650887574\\
0.616	0.726415094339623\\
0.617	0.72502937720329\\
0.618	0.723653395784543\\
0.619	0.722287047841307\\
0.62	0.72093023255814\\
0.621	0.719582850521437\\
0.622	0.71824480369515\\
0.623	0.716915995397008\\
0.624	0.715596330275229\\
0.625	0.714285714285714\\
0.626	0.706546275395034\\
0.627	0.698996655518395\\
0.628	0.691629955947137\\
0.629	0.684439608269859\\
0.63	0.677419354838709\\
0.631	0.670563230605739\\
0.632120558828558	0.632120558828558\\
};
\node[anchor=west] at (axis cs:.53, 0.92){\textcolor{red}{\small not achievable}};
\node[mark size=3pt, anchor= center, line width=1.5pt] at (axis cs: 0.632120558828558,0.632120558828558) {\pgfuseplotmark{x}};
\node[anchor=west] at (axis cs:0.58, 0.57){{\small$(1-\frac{1}{e},1-\frac{1}{e})$}};
\end{axis}
\end{tikzpicture}%
	\caption{The figure provides an illustration of the inherent trade-off between the price of anarchy and price of stability.  First, note that the gray region is not achievable since $0 \le {\rm PoA}(\gee_f) \le {\rm PoS}(\gee_f)\le 1$ by definition. Theorem~\ref{thm:tradeoff} demonstrates that the red region is also \emph{not} achievable.  That is, there does not exist a distribution rule with joint price of anarchy and price of stability guarantees in the red region. For example, if the desired price of anarchy is $\alpha \leq 1/2$, then a price of stability of $1$ is attainable while meeting this price of anarchy demand.  However, if the desired price of anarchy is $\alpha = 1-1/e$, then a price of stability of $1$ is no longer attainable.  In fact, the best attainable price of stability is now also $1-1/e$.%
	}
		\label{fig:tradeoff}
	\end{center}
\end{figure}

\DP{Before delving into the proof of Theorem~\ref{thm:tradeoff}, we present an important result that characterizes the price of stability.  This results confirms the intuition of the structure of the worst-case games for the price of anarchy as shown in the proofs of Propositions~\ref{claim:eq-share} and \ref{claim:gair-share}.} 
\DP{
\begin{theorem}\label{t2}
	Let $n \geq 2$ and consider any distribution rule $f \in {\cal F}$. The price of stability associated with the induced family of games $\gee_f^n$ satisfies
	\begin{equation}\label{eq:pos2}
	{\rm PoS}(\gee_f^n) = \min_{1\le j \leq n } \left\{ \frac{1}{1+(j-1)  f(j) } \right\}.
	\end{equation}
	Further, the optimal price of stability satisfies 
\begin{equation}\label{eq:981}
	\max_{f \in {\cal F}} \ {\rm PoS}(\gee_f^n) = 1,
\end{equation}
	and a distribution rule that achieves this price of stability over the induced games $\gee_f^n$ is the marginal contribution distribution rule defined in \eqref{eq:fmc}.  Lastly, \eqref{eq:pos2} is satisfied with equality for single-selection covering games with $n$ agents.  
\end{theorem}}
\DP{Note that proving a bound on the price of stability is often technically challenging, as we need to perform a worst-case analysis over a reduced set of equilibria, i.e., over the best-performing equilibria. As a result of this, the proof of Theorem~\ref{t2} is non-trivial, and therefore deferred to the Appendix.}

\subsection{Proof of Theorem~\ref{thm:tradeoff}}
In this section we complete the Proof of Theorem~\ref{thm:tradeoff}. 
\DP{%
Part~(i) follows immediately from Lemma~\ref{t1}. Part~(ii) follows from Proposition~\ref{claim:mc-share}. We now turn the attention to Part~(iii). Towards this goal, let 
$\mathcal{F}(\alpha)$, $\alpha \in (0,1]$ be the family of distribution rules that guarantee a price of anarchy of at least $\alpha$, i.e., 
\[
\mathcal{F}(\alpha)=\{ f\in\mathcal{F} : {\rm{PoA}}(\gee_f^n) \geq \alpha \}.
\]
Additionally, let 
\begin{equation}\label{eq:posgivenalpha}
{\rm PoS}(\gee^n;\alpha) = \max_{f \in \mathcal{F}(\alpha)} \   {\rm PoS}(\gee_f^n) 
\end{equation}
be the best achievable price of stability given that the price of anarchy is guaranteed to exceed $\alpha$, where $\alpha \in (0,1)$.
}

	We now prove that \eqref{eq:t3} holds with the equality sign, when restricting to single-selection covering games. It follows immediately that \eqref{eq:t3} hold with the inequality sign for general (not necessarily single-selection) covering game.

	Restricting our attention to single-selection covering games, from Theorem~\ref{t2} and~\eqref{eq:posgivenalpha} we have
	\[
	{\rm PoS}(\gee^n; \alpha) = \max_{f \in {\cal F}(\alpha)} \min_{1\le j \leq n} \left\{ \frac{1}{1+ (j-1)  f(j)} \right\}
	\]
	or equivalently
	\begin{equation}\label{eq:l2}
	{\rm PoS}(\gee^n; \alpha) = \max_{f \in {\cal F}(\alpha)} \min_{1\le j \leq n-1} \left\{ \frac{1}{1+ j  f(j+1)} \right\},
	\end{equation}
	since for $j=1$ it is $(j-1)f(j)=0$, and $(j-1)f(j)\ge 0$ for all other $2\le j\le n$.
	Thanks to Lemma~\ref{t1}, the constraint ${\rm{PoA}}(\gee_f^n) \geq \alpha$ can be written as 
	\begin{alignat*}{4}
	& && j  f(j)  - \left(\frac{1}{\alpha} -1\right) \leq f(j+1) &&&\quad  1\le j \le n-1,\\
	& && jf(j+1) \leq \left(\frac{1}{\alpha}-1\right) &&&\quad  1\le j \le n-1,\\
	& && (j+1) f(j+1) - 1\leq \left(\frac{1}{\alpha}-1\right) &&&\quad  1\le j\le n-1.
	\end{alignat*} 
	Thus, the optimization problem in \eqref{eq:l2} is equivalent to
	\begin{alignat*}{4}
	\min_{f} & \underset{1\le j \le n-1}{\max} && j  f(j+1) &&& \\
	& \text{s.t.} && j  f(j)  - \left(\frac{1}{\alpha} -1\right) \leq f(j+1) &&&\quad  1\le j \le n-1,\\
	& && jf(j+1) \leq \frac{1}{\alpha}-1 &&&\quad  1\le j \le n-1,\\
	& && (j+1) f(j+1) \leq \frac{1}{\alpha} &&&\quad  1\le j\le n-1,\\
	& && f(j)\ge 0 &&&\quad  1\le j \le n,\hspace*{6mm}\\
	& && f(1) = 1. &&&
	\end{alignat*}
	Recall that $f(1)=1$. Recursively applying the first set of inequalities, it follows that for every $j$ with $1\le j\le n-1$
	\[
	jf(j+1)\ge j\,j! \biggl(1 - \left(\frac{1}{\alpha}-1\right) \sum_{i=1}^j \frac{1}{i!}\biggr).
	\]
	Since our objective is precisely to minimize the quantity $\max_{1\le j\le n-1} jf(j+1)$, we find a candidate solution by first solving for the distribution rule that satisfies the $n-1$ linear inequalities with equality, or set it to zero if the term $jf(j)-({1}/{\alpha}-1)$ is negative. Such a distribution rule can be computed recursively, and is of the form
	\[
	\hat{f}(j) = \max\biggl\{ (j-1)! \biggl(1 - \left(\frac{1}{\alpha}-1\right) \sum_{i=1}^{j-1} \frac{1}{i!}\biggr), 0 \biggr\}.
	\]
	By construction, the above distribution satisfies $\hat f(1)=1$ as well as the first and fourth set of constraints in the optimization problem above. In the following we verify that $\hat f$ also satisfies the remaining set of constraints, and thus is optimal. \DP{Note that the terms $\hat{f}(k)$, $k > n$, are irrelevant in the given analysis. We now show that $\hat{f}$ also satisfies the second and third set of constraints.}
	We begin with the second set of constraints, i.e., we wish to show $j\hat{f}(j+1) \leq \left({1}/{\alpha}-1\right)$ for all $\alpha$ with $\frac{1}{2}< \alpha\le \max _{f\in\mathcal{F}} {\rm PoA}(\gee_f^n)$ and for all $j$ with $1\le j \le n-1$. This is equivalent to
	\[
	j\,j!\biggl(1 - \left(\frac{1}{\alpha}-1\right) \sum_{i=1}^{j} \frac{1}{i!}\biggr)\le \frac{1}{\alpha}-1\,,
	\]
	as the result follows readily when $\hat f$ takes the value $\hat f (j)=0$. After some manipulation this is equivalent to showing
	\[
	\frac{1}{j\,j!}+\sum_{i=1}^j\frac{1}{i!}\ge\biggl(\frac{1}{\alpha}-1\biggr)^{-1}
	\]
	for all $\alpha$ with $\frac{1}{2}< \alpha\le \max _{f\in\mathcal{F}} {\rm PoA}(\gee_f^n)$ and for all $j$ with $1\le j \le n-1$.
	Since $\alpha\le \max _{f\in\mathcal{F}} {\rm PoA}(\gee_f^n)$, the result follows if we are able to prove the above inequality for $\alpha=\max _{f\in\mathcal{F}} {\rm PoA}(\gee_f^n)$, i.e.,  
	\begin{equation}
	\frac{1}{j\,j!}+\sum_{i=1}^j\frac{1}{i!}\ge\biggl(\frac{1}{\max _{f\in\mathcal{F}} {\rm PoA}(\gee_f^n)}-1\biggr)^{-1}.
	\label{eq:inequalityoptf}
	\end{equation}
	The expression for the optimal price of anarchy was found \cite[Corollary 1]{Gairing09}, and amounts to
	\[
	{\max _{f\in\mathcal{F}} {\rm PoA}(\gee_f^n)} = 1 - \frac{1}{\frac{1}{(n-1)(n-1)!} + \sum_{i=0}^{n-1} \frac{1}{i!}}.
	\]
	Replacing this in \eqref{eq:inequalityoptf}, we are left to prove that for all $1\le j\le n-1$
	\begin{equation}
	\label{eq:desiredresult}
	\frac{1}{j\,j!}+\sum_{i=0}^j\frac{1}{i!}\ge
	\frac{1}{(n-1)(n-1)!}+\sum_{i=0}^{n-1}\frac{1}{i!}\,.	
	\end{equation}
	To do so, let us define the function $g:\mathbb{N}\rightarrow\mathbb{R}$ as
	\[
	g(j)=\frac{1}{j\,j!}+\sum_{i=0}^j\frac{1}{i!},
	\]
	and observe that $g(j)$ is non-increasing in $j$. To see this note that 
	\begin{alignat*}{2}
	&g(j)\ge g(j+1)  &&\iff\\
	&\frac{1}{j\,j!}+\sum_{i=0}^j\frac{1}{i!}\ge
	\frac{1}{(j+1)\,(j+1)!}+\sum_{i=0}^{j+1}\frac{1}{i!}
	&&\iff\\
	&\frac{1}{j\,j!}-\frac{1}{(j+1)(j+1)!}\ge \frac{1}{(j+1)!} &&\iff\\
	&\frac{1}{j}\ge\frac{1}{j+1},
	\end{alignat*}
	which holds for any $j\in\mathbb{N}$. Thus, we conclude that for $j$ with $1\le j \le n-1$ it holds
	\[
	g(j)\ge g(n-1),
	\]
	which is precisely the desired result in \eqref{eq:desiredresult}, thus proving that the third set of inequalities is satisfied.
	
	We now turn our attention to the third set of constraints, and wish to prove that $(j+1)\hat{f}(j+1) \leq {1}/{\alpha}$ for all $\alpha$ with $\frac{1}{2}< \alpha\le \max _{f\in\mathcal{F}} {\rm PoA}(\gee_f^n)$ and for all $j$ with $1\le j \le n-1$. This follows directly as a consequence of $j\hat{f}(j+1)\le (1/\alpha-1)$ previously shown. Indeed 
	\[
	j\hat{f}(j+1)\le\frac{1}{\alpha}-1 \iff \hat f(j+1)\le \frac{1}{j}\left(\frac{1}{\alpha}-1\right), 
	\] 
	which implies the desired result
	\[
	\begin{split}
	(j+1)\hat f(j+1)&\le \frac{j+1}{j}\left(\frac{1}{\alpha}-1\right)=\left(\frac{1}{\alpha}-1\right)+\frac{1}{j}\left(\frac{1}{\alpha}-1\right)\\
	&\le \left(\frac{1}{\alpha}-1\right)+1 = \frac{1}{\alpha}\,,
	\end{split}
	\]
	where the last inequality follows from $j\ge 1$ and $1/\alpha-1\le 1$ (due to $\alpha > 1/2$). This concludes the above reasoning, and shows that $\hat f$ is a solution to \eqref{eq:l2}.
	Using Theorem~\ref{t2} on such distribution rule gives the desired result.

The result in Part (iv) can be proven by replacing $\alpha$ in \eqref{eq:t2} with the optimal price of anarchy from \cite{paccagnan2018arxiv2,Gairing09}, i.e., set
	\[
	\alpha^\star = 1 - \frac{1}{\frac{1}{(n-1)(n-1)!} + \sum_{i=0}^{n-1} \frac{1}{i!}}.
	\]
	After some manipulation, we obtain
	\[
	{\rm{PoS}}(\gee^n;\alpha^\star)	=  \frac{{\frac{1}{(n-1)(n-1)!} + \sum_{i=1}^{n-1} \frac{1}{i!}}}{{\frac{1}{(n-1)(n-1)!} + \sum_{i=0}^{n-1} \frac{1}{i!}}}=\max _{f\in\mathcal{F}} {\rm PoA}(\gee_f^n).
	\] Taking the limit as $n\to\infty$ gives the final result.
	$\hfill \Box$

\section{Using Information to Breach the Anarchy / Stability Frontier}
\label{sec:additionalinfo}

The previous section highlights a fundamental tension between the price of stability and price of anarchy for the given covering problem when restricted to local agent objective functions of the form \eqref{eq:ut}. In this section, we challenge the role of locality in these fundamental trade-offs. That is, we show how to move beyond the price of anarchy / price of stability frontier given in Theorem \ref{thm:tradeoff} if we allow the agents to condition their choice on a higher degree of system-level information. To do so, we restrict our attention to single-selection covering games.

With this goal in mind, we introduce a minimal and easily attainable piece of system-level information that can permit the realization of decision-making rules with efficiency guarantees beyond this frontier. To that end, for each allocation $a\in \mathcal{A}$ we define the information flow graph $(V,E)$ where each node of the graph represents an agent and we construct a directed edge $i\rightarrow j$ if $a_i \in \mathcal{A}_j$ for $i\ne j$ (no self loops). \DP{This describes the situation in which the resource selected by agent $i$ in allocation $a$ also belongs to the allocation set of agent $j$.} Based on this allocation-dependent graph, we define for each agent $i$ the set
$\nee_i(a)\subseteq N$ consisting of all the agents that
can reach $i$ through a path in the graph $(V,E)$.
Similarly, for each agent $i$ we define
\begin{equation}\label{eq:qidef}
Q_i(a)=(\cup _{j\in \nee_i(a)} \mathcal{A}_j)\cup \mathcal{A}_i,
\end{equation}
which consists of the union of $\mathcal{A}_i$ and all the sets of other agents that can reach $i$ through a path in the graph. An example is shown in Figure \ref{fig:enlargedsets}. Building upon this graph we define the following quantities:
\begin{eqnarray}
V_i(a) & = &  \max_{r \in \aee_i\setminus a_{-i}} v_{r}, \label{eq:videf}\\
x_i(a) & = & \max_{r \in Q_i(a)\setminus a} v_{r}.\label{eq:xidef}
\end{eqnarray}
The term $V_i(a)$ captures the highest valued resource in agent $i$'s choice set $\aee_i$ that is not covered by any agent. If the set $\mathcal{A}_i\setminus a_{-i}$ is empty, we set $V_i(a)=0$.
Similarly, the term $x_i(a)$ captures the highest-valued resource in the enlarged set $Q_i(a)$ not currently covered by any other agent. If the set $Q_i(a)\setminus a$ is empty, we set $x_i(a)=0$.

We are now ready to specify the information based covering game with a set of agents $N$, where each agent has an action set $\aee_i \subseteq \ree$.  Here, we consider a state-based distribution rule that toggles between the two extreme optimal distribution rules $\fgair$ and $\fmc$.  More formally, the distribution rule for agent $i$ is now of the form
\begin{equation}\label{eq:statebasedf}
f_i^{\rm sb}(a_i=r,a_{-i}) = \left\{ \begin{array}{ll} 
\fmc(|a|_r) &\text{if} \ V_i(a) \geq x_i(a), \\[.25cm]
\fgair(|a|_r) &\text{otherwise}, 
\end{array}\right. 
\end{equation}
and the corresponding utilities are given by $$
U_i(a_i=r,a_{-i})=v_r\cdot f_i^{\rm sb}(a),$$ as we allow the system-level information $x_i(a)$ and $V_i(a)$ to prescribe which distribution rule each agent applies. We denote with $\fsb=\{f_i^{\rm sb}\}_{i\in N}$ the collection of distribution rules in \eqref{eq:statebasedf} and informally refer to it as the state-based distribution rule. Throughout, we express the distribution rule as merely $f_i^{\rm sb}(a)$ instead of $f_i^{\rm sb}(x_i(a),V_i(a))$ for brevity. 

The next theorem demonstrates how $\fsb$ attains performance guarantees beyond the price of stability / price of anarchy frontier established in Theorem~\ref{thm:tradeoff}.

\begin{theorem}\label{t4}
	Consider any single-selection maximum coverage game with a state-based distribution rule $\fsb$ as defined above.  First, an equilibrium is guaranteed to exist in any game $G \in \gee_{\fsb}$.  Furthermore, the price of anarchy and price of stability associated with the induced family of games $\gee_{\fsb}$ are
	\[
	\begin{split}	
	{\rm PoS}(\gee_{\fsb}) =& 1,\\
	{\rm PoA}(\gee_{\fsb}) =& \max_{f\in\mathcal{F}} \ {\rm PoA}(\gee_{f})= 1 - \frac{1}{e}.
	\end{split}
	\]
\end{theorem}

\begin{figure}[t]
	\centering
	\includegraphics[width=0.3\columnwidth]{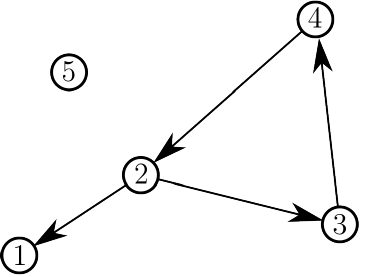}
	\\[3mm]
	\begin{tabular}[b]{ccc}\hline
		Agent $i$ & $\nee_i(a)$ & $Q_i(a)$ \\ \hline
		1 & \{2,\,3,\,4\} & $\mathcal{A}_1\cup\mathcal{A}_2\cup \mathcal{A}_3\cup \mathcal{A}_4$  \\
		2 & \{3,\,4\} & $\mathcal{A}_2\cup \mathcal{A}_3\cup \mathcal{A}_4$ \\
		3 & \{2,\,4\} &  $\mathcal{A}_2\cup \mathcal{A}_3\cup \mathcal{A}_4$ \\
		4 & \{2,\,3\} &  $\mathcal{A}_2\cup \mathcal{A}_3\cup \mathcal{A}_4$ \\
		5 & $\emptyset$ &  $\mathcal{A}_5$ \\
		\hline
	\end{tabular}
	\caption{{Example of graph $(V,E)$.} The sets $\nee_i(a)$ and $Q_i(a)$ are enumerated for each agent in the table on the right. Observe that the graph $(V,E)$ as well as the sets $\nee_i(a)$ and $Q_i(a)$ are allocation dependent.}
	\label{fig:enlargedsets}
\end{figure}

Recall from Theorem~\ref{thm:tradeoff} that a consequence of attaining a price of anarchy of $1-1/e$ was a price of stability also of $1-1/e$ and this was achieved by $\fgair$ defined in \eqref{eq:fgair}. Using the state-based rule given in \eqref{eq:statebasedf}, a system designer can achieve the optimal price of anarchy without any consequences for the price of stability. Hence, the identified piece of system-level information was crucial for moving beyond the inherited performance limitations by adhering to our notion of local information. Whether alternative forms of system-level information could move us beyond these guarantees, or achieve these guarantees with less information, is an open question.

\subsection{Proof of Theorem~\ref{t4}}

We will now present the proof of Theorem~\ref{t4}.  For readability, we will split the proof into the following two subsections focusing on the price of anarchy and price of stability respectively.

\subsubsection{Proof of ${\rm PoS}$ Result}

We begin our proof with a lemma that identifies an important structure regarding the state based distribution rule $\fsb$. \\ 
\begin{lemma}\label{lemma:always}
	Let $a$ be any allocation. Then for each agent $i \in N$, one of the following two statements is true.  
	\begin{itemize}
		\item[-] $U_i(a_i'=r, a_{-i}) = v_r \cdot \fmc(|a_{-i}|+1), \ \forall a_i' \in \aee_i$; or 
		\item[-] $U_i(a_i'=r, a_{-i}) = v_r \cdot \fgair(|a_{-i}|+1), \ \forall a_i' \in \aee_i$.
	\end{itemize}
\end{lemma}

Informally, this lemma states that for a given allocation $a$, the state based distribution rule will either evaluate every resource at $\fgair$ or $\fmc$ for a given agent $i$.  

\pad 

{\it Proof of Lemma~\ref{lemma:always}}
	Let $a$ be any allocation and $i$ be any agent. Extend the definition of $x_i(a)$ in \eqref{eq:xidef} as
	\begin{eqnarray}
	y_i(a) &=& \max_{r \in (Q_i(a) \setminus a) \cap \aee_i} v_r, \\
	z_i(a) &=& \max_{r \in (Q_i(a) \setminus a) \setminus \aee_i} v_r,	
	\end{eqnarray} 
	and note that $x_i(a) = \max \{y_i(a), z_i(a)\}$. First observe that for any $a_i' \in \aee_i$: (i) $y_i(a) \leq V_i(a)$; (ii) $z_i(a) = z_i(a_i', a_{-i})$; and (iii) $V_i(a) = V_i(a_i', a_{-i})$. Accordingly, $x_i(a) > V_i(a)$ if and only if $x_i(a) = z_i(a)$.  Consequently, $x_i(a) > V_i(a)$ if and only if $x_i(a_i', a_{-i}) > V_i(a_i', a_{-i})$ which completes the proof.  
$\hfill \Box$

We will now prove that an equilibrium exists and the price of stability is $1$.  In particular, we will show that the optimal allocation $a^{\rm opt}$ is in fact an equilibrium.  

\begin{lemma}\label{lemma:optimalisne}
	Consider any single-selection maximum coverage game with a state-based distribution rule $\fsb$.  The optimal allocation $\aopt$ is an equilibrium. 
\end{lemma}

{\it Proof of Lemma~\ref{lemma:optimalisne}}
	Let $a^{\rm opt}$ be an optimal allocation.  We begin by showing that $V_i(a^{\rm opt}) \geq x_i(a^{\rm opt})$ for all $i \in N$.  Suppose this was not the case, and there exists an agent $i$ such that $V_i(a^{\rm opt}) < x_i(a^{\rm opt})$. This implies that there exists a resource $r \in Q_i(\aopt)\setminus\aopt$ such that $v_r > v_{\tilde r}=\max_{r_0 \in \aee_i : |a_{-i}^{\rm opt}| = 0} v_{r_0}$. By definition of $Q_i(a)$ and ${\cal N}_i(a)$ there exists a sequence of players $\{i_0, i_1, \dots, i_m\}$ such that $a_{i_k}^{\rm opt} \in \aee_{i_{k-1}}$ for all $k \geq 1$, and $r \in \aee_{i_m}$. Hence, consider a new allocation where $a_{i_k} = a_{i_{k+1}}^{\rm opt}$ for all $k \in \{0, \dots, m-1\}$ and $a_{i_m}=r$, where $a_j = a_j^{\rm opt}$ for all other agents $j\notin\{i_0, i_1, \dots, i_m\}$.  The welfare of this allocation is $W(a) \geq W(a^{\rm opt})+v_r - v_{\tilde{r}} > W(a^{\rm opt})$, which contradicts the optimality of $a^{\rm opt}$. This means that every agent will be using the marginal cost distribution rule to evaluate its utility in the allocation $\aopt$.  Now, suppose $a^{\rm opt}$ is not an equilibrium for sake of contradiction. This means, that there exists an agent $i$ with an action $a_i \in \aee$ such that $U_i(a_i, a_{-i}^{\rm opt}) > U_i(\aopt)$. Since agents are using the marginal contribution distribution rule, which follows from $V_i(a^{\rm opt}) \geq x_i(a^{\rm opt})$, we have 
	$$ W(a_i, a_{-i}^{\rm opt}) = W(a^{\rm opt}) - U_i(a^{\rm opt}) + U_i(a_i, a_{-i}^{\rm opt}) > W(a^{\rm opt}), $$
	which contradicts the optimality of $a^{\rm opt}$. This completes the proof.  
$\hfill \Box$

\subsubsection{Informal Discussion of ${\rm PoA}$ Result}
In the following we give an informal discussion for the price of anarchy result. This will be followed by a formal proof.

Consider a game $G = ( N, \ree, \{\aee_i\}, \{f_i\},\{v_r\})$. Let $a^{\rm ne}$ be any equilibrium of the game $G$.  A crucial part of the forthcoming analysis will center on a new game $G'$ derived from the original game $G$ and the equilibrium $a^{\rm ne}$, i.e., 
$$ (G, a^{\rm ne}) \ \longrightarrow G'. $$
This new game $G'$ possesses the identical player set, resource set, and valuations of the resources as the game $G$. The difference between the games are (i) the action sets and (ii) the new game $G'$ employs the Gairing distribution rule, $\fgair$, as opposed to the state-based distribution rule $\fsb$. Informally, the proof proceeds in the following two steps:

\pad

\noindent \textbf{{-- Step 1}:} 
We prove that the equilibrium $a^{\rm ne}$ of $G$ is also an equilibrium of $G'$.  Since the player set, resource set, and valuations of the resources are unchanged we have that 
$$W(a^{\rm ne};G')=W(a^{\rm ne};G),$$
where we write the notation $W(a^{\rm ne};G')$ to mean the welfare accrued at the allocation $a^{\rm ne}$ in the game $G'$.

\pad

\noindent \textbf{{-- Step 2}:} 
We show that the optimal allocation $\aoptprime$ in the new game $G'$ is at least as good as the optimal allocation in the original game $G'$, i.e., 
$$W(\aoptprime;G') \geq W(a^{\rm opt};G).$$
Combining the results from Step~1 and Step~2 give us
$$ \frac{W(a^{\rm ne};G)}{W(a^{\rm opt};G)} \geq \frac{W(a^{\rm ne};G')}{W(\aoptprime;G')} \geq {\rm PoA}(\mathcal{G}_{\fgair}), $$
where the last inequality follows from Lemma~\ref{t1} since $G'$ employs $\fgair$.

\subsubsection{Proof of ${\rm PoA}$ Result}

In this section we present the formal proof for the price of anarchy result pertaining to the state based design.  We begin by providing the formal details on the game construction highlighted above.  

\subsubsection*{Construction of game $G'$} 

We will now provide the construction of the new game $G'$ from the game $G$ and the equilibrium $a^{\rm ne}$. We begin with some notation that we will use to construct the new agents' action sets in the game $G'$. For each $i \in N$, let $r_i = a_i^{\rm ne}$ and define
\begin{equation}\label{eq:H}
H_i= \{r \in \aee_i: v_{r_i}  \fgair(|a^{\rm ne}|_{r_i}) < v_{r}  \fgair(1+|a_{-i}^{\rm ne}|_r)\},
\end{equation}
to be the set of resources that will give a strictly better payoff to agent $i$ if the agent was to use $\fgair$ instead of $\fsb$.  Further, denote 
\begin{equation}
\mathcal{I} = \{ i \in N: H_i \neq \emptyset \},
\end{equation} 
as the set of agents that would move from the equilibrium $\ane$ if they were to use $\fgair$ instead of $\fsb$. Finally, for each agent~$i \in \mathcal{I}$, let
\begin{equation}\label{eq:B}
B_i = \{r' \in Q_{i}(\ane) : |\ane|_{r'} = 0 \}\,,
\end{equation}
i.e. the enlarged set of resources of \eqref{eq:qidef} that are not chosen by anyone at equilibrium. 

We are now ready to construct the new game $G'$.  As noted above, the agent set, resource set, and resource valuations are identical to those in $G$. The new action set of each agent $i$ in game $G'$ is defined as 
\[
\aee_i'=
\begin{cases}	
&(\aee_i \backslash H_i )\cup B_i \cup \emptyset  \qquad \text{if}\quad  i\in\mathcal{I}\,,\\
&~\aee_i  \hspace*{33mm} \text{otherwise}\,. 
\end{cases}
\]
Lastly, the utility functions of the agents are derived using $\fgair$.  We denote this game $G'$ by the tuple $G' = ( N, \ree, \{\aee_i'\}, \fgair,\{v_r\})$.

\vspace{.3cm}

{\it Formal Proof of Step 1:} 
In the first part of the proof, we will establish that the equilibrium allocation $a^{\rm ne}$ of game $G$ is also an equilibrium allocation of game $G'$.  By definition of the action sets $\{\aee_i'\}$, we know that $a_i^{\rm ne} \in \aee_i'$ for all agents $i \in N$.  It remains to show that for any agent $i \in N$
\begin{equation}\label{eq:951}
U_i(a_i^{\rm ne}, a_{-i}^{\rm ne}; G') \geq U_i(a_i', a_{-i}^{\rm ne}; G'), \ \forall a_i' \in \aee_i',
\end{equation} 
where we use the notation $U_i(a; G')$ to denote the utility of agent $i$ for allocation $a$ in the game $G'$. 

Based on the above definition of the action sets $\{\aee_i'\}$, we only need to concentrate our attention on agents $i \in \mathcal{I}$ with choices $a_i' \in B_i$, as \eqref{eq:951} follows immediately in the other cases.  Since $a^{\rm ne}$ is an equilibrium of game $G$, then by Lemma~\ref{lemma:always} we know that each agent $i \in \mathcal{I}$ employs the marginal contribution distribution rule everywhere.  Accordingly, we have
\begin{equation}\label{eq:952}
U_i(a_i^{\rm ne}, a_{-i}^{\rm ne}; G) = \max_{r \in \aee_i\setminus a_{-i}^{\rm ne}} v_r =  V_i(a^{\rm ne}) \geq x_i(a^{\rm ne}),
\end{equation} 
where the first equality follows from the equilibrium conditions coupled with the use of the marginal contribution distribution rule and the inequality follows from the use of the marginal contribution distribution rule and \eqref{eq:statebasedf}.

We will conclude the proof by a case study on the potential values of $U_i(a_i^{\rm ne}, a_{-i}^{\rm ne}; G)$.  For brevity in the forthcoming arguments we let $r = a_i^{\rm ne}$.

\pad

\noindent -- {Case (i):} Suppose $U_i(a_i^{\rm ne}, a_{-i}^{\rm ne}; G)=0$.  In this case, we have that $x_i(a^{\rm ne}) = 0$ from \eqref{eq:952} which implies that any resource $r' \in B_i$ has value $v_{r'}=0$. Hence, $U_i(a_i^{\rm ne}, a_{-i}^{\rm ne}; G') = U_i(r', a_{-i}^{\rm ne}; G') = 0$ and we are done.  

\pad

\noindent -- {Case (ii):} Suppose $U_i(a_i^{\rm ne}, a_{-i}^{\rm ne}; G)>0$.  Based on the definition of the marginal cost distribution rule, this implies that $|a^{\rm ne}|_r = 1$, and hence
$$ U_i(a^{\rm ne};G) = v_r  \fmc(|a^{\rm ne}|_r) = v_r  \fgair(|a^{\rm ne}|_r) = U_i(a^{\rm ne};G'),$$  
which follows from the fact that $\fmc(1) = \fgair(1)=1$. Hence, $U_i(a^{\rm ne};G')=v_r$. For any $r' \in B_i$, we have 
$$ U_i(r', a_i^{\rm ne};G') = v_{r'} \fgair(1+|a^{\rm ne}_{-i}|_{r'}) = v_{r'} \leq x_i(a^{\rm ne}),$$
where the last inequality follows for the definition of $x_i(a^{\rm ne})$.  Combining with \eqref{eq:952} gives us 
$$U_i(a_i^{\rm ne}, a_{-i}^{\rm ne}; G') \geq U_i(r', a_{-i}^{\rm ne}; G'),$$
which completes the proof.  $\hfill\Box$

\vspace{.3cm}

{\it Formal Proof of Step 2:}	
We begin with a lemma that highlights a structure associated with the action sets $\{\aee_i'\}$ in the new constructed game $G'$.

\begin{lemma}\label{lem:v}
	If $r \in \aee_i' \setminus \aee_i$ for some agent $i \in N$ and resource $r \in \mathcal{R}$, then there exists an agent $j \neq i$ such that $a_j^{\rm ne} = r$ and consequently $r \in \aee_j'$.  
\end{lemma}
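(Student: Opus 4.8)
The plan is to trace the newly available resource $r$ back, through the information-flow graph induced by $\ane$, to the agent that supplied it. First I would reduce to the nontrivial case: since $\aee_i'=\aee_i$ for every $i\notin\mathcal{I}$, the hypothesis $r\in\aee_i'\setminus\aee_i$ already forces $i\in\mathcal{I}$, so that $\aee_i'=(\aee_i\setminus H_i)\cup B_i\cup\emptyset$ by construction. As $r\in\ree$ is a genuine resource it is neither the opt-out symbol nor an element of $\aee_i\setminus H_i\subseteq\aee_i$ (the latter because $r\notin\aee_i$); hence necessarily $r\in B_i$. By the definition of $B_i$ in~\eqref{eq:B} this records the two facts I will use throughout: $r\in Q_i(\ane)$ and $|\ane|_r=0$.

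Next I would unwind $Q_i(\ane)$ from~\eqref{eq:qidef}. Because $Q_i(\ane)=\aee_i\cup\bigcup_{\ell\in\nee_i(\ane)}\aee_\ell$ and we have excluded $r\in\aee_i$, there is an agent $j\in\nee_i(\ane)$ with $r\in\aee_j$; since $\nee_i(\ane)$ never contains $i$ itself (the graph has no self-loops), this $j$ automatically satisfies $j\neq i$, and it is the agent singled out in the statement. Membership $j\in\nee_i(\ane)$ supplies a directed path $j\to\cdots\to i$, each of whose edges $p\to q$ encodes $\ane_p\in\aee_q$; reading this path off is how I would certify that $j$ genuinely routes $r$ toward $i$ and pin down its relationship to the equilibrium profile $\ane$.

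It then remains to upgrade $r\in\aee_j$ to $r\in\aee_j'$, i.e.\ to show that passing from $G$ to $G'$ does not erase $r$ from agent $j$. The crux is that $r$ is \emph{uncovered}: from $r\in\aee_j\subseteq Q_j(\ane)$ and $|\ane|_r=0$ we obtain $r\in B_j$ by~\eqref{eq:B}. If $j\in\mathcal{I}$ then $B_j\subseteq\aee_j'$, and if $j\notin\mathcal{I}$ then $\aee_j'=\aee_j\ni r$; in either case $r\in\aee_j'$, which is the operative ``consequently'' conclusion required later in Step~2.

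I expect the genuine obstacle to sit in the middle step --- faithfully extracting $j$ from the reachability structure of $(V,E)$ and relating it to $\ane$. The delicate point is that $r$ is uncovered at $\ane$, so the chain produced by $Q_i$ naturally identifies an agent $j$ with $r\in\aee_j$ rather than one literally selecting $r$; the argument must therefore be carried in terms of action-set membership along the path rather than realized choices, and the non-erasure claim $r\in\aee_j'$ rests squarely on $B_j$ reabsorbing every uncovered resource of $Q_j(\ane)$, which is precisely what prevents the set $H_j$ from deleting $r$.
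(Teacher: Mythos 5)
There is a genuine gap, and it originates in taking the statement's set difference at face value. As written, the hypothesis $r\in\aee_i'\setminus\aee_i$ is a typo: since $\aee_i'=(\aee_i\setminus H_i)\cup B_i\cup\emptyset$ and every $r'\in B_i$ satisfies $|\ane|_{r'}=0$ by \eqref{eq:B}, any $r\in\aee_i'\setminus\aee_i$ is \emph{uncovered} at equilibrium, so the stated conclusion ``there exists $j\neq i$ with $a_j^{\rm ne}=r$'' is outright false under this reading. The intended hypothesis is the reverse difference $r\in\aee_i\setminus\aee_i'$, equivalently $r\in H_i$: this is exactly how the lemma is invoked in Step~2 (there $a_{i_0}^{\rm opt}=r_0\in\aee_{i_0}$ but $r_0\notin\aee_{i_0}'$, giving $r_0\in H_{i_0}$), and the paper's own proof opens with ``Then, $r\in H_i$,'' which is incompatible with $r\notin\aee_i$ since $H_i\subseteq\aee_i$. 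You noticed the tension --- your closing paragraph concedes that your chain identifies an agent with $r\in\aee_j$ ``rather than one literally selecting $r$'' --- but instead of diagnosing the inverted hypothesis you silently replaced the conclusion with the weaker claim that some $j\neq i$ has $r\in\aee_j'$. Your derivation of that weaker claim (via $r\in B_i$, unwinding $Q_i(\ane)$ from \eqref{eq:qidef}, and reabsorbing $r$ into $B_j$) is internally sound, but it does not prove the lemma, and it cannot feed Step~2: the cascade in Cases 1--3 needs the \emph{occupant} $i_1$ with $a_{i_1}^{\rm ne}=r_0$ so that displaced agents can be reassigned onto their equilibrium choices; mere action-set membership $r\in\aee_j'$ supplies no such agent.

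As a result, the entire substance of the paper's proof is absent from your proposal. Under the correct hypothesis $r\in H_i$, one argues: $H_i\neq\emptyset$ puts $i\in\mathcal{I}$; by Lemma~\ref{lemma:always}, agent $i$ evaluates all resources either with $\fmc$ or with $\fgair^k$, and since $\ane$ is an equilibrium a Gairing agent would have $H_i=\emptyset$, so $i$ must be a marginal-contribution agent; the equilibrium condition then gives $v_{\tilde r}\,\fmc(|\ane|_{\tilde r})\ge v_r\,\fmc(|a_{-i}^{\rm ne}|_r+1)$ with $\tilde r=a_i^{\rm ne}$, and if $|a_{-i}^{\rm ne}|_r=0$ the identities $\fmc(1)=\fgair^k(1)=1$ together with $\fgair^k\ge\fmc$ would yield $v_{\tilde r}\,\fgair^k(|\ane|_{\tilde r})\ge v_r\,\fgair^k(|a_{-i}^{\rm ne}|_r+1)$, contradicting $r\in H_i$. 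Hence $|a_{-i}^{\rm ne}|_r\ge 1$, i.e., some $j\neq i$ has $a_j^{\rm ne}=r$, and $r\in\aee_j'$ because equilibrium choices are never deleted ($a_j^{\rm ne}\notin H_j$, as the defining inequality of $H_j$ cannot be strict at $a_j^{\rm ne}$ itself). None of this equilibrium reasoning --- the dichotomy of Lemma~\ref{lemma:always} and the comparison of $\fmc$ against $\fgair^k$ --- appears anywhere in your argument, so the key idea of the proof is missing, not merely obscured by the typo.
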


\pad

{\it Proof of Lemma~\ref{lem:v}}
	Suppose $r \in \aee_i' \setminus \aee_i$ for some agent $i \in N$ and resource $r \in \mathcal{R}$.   Then, $r \in H_i$ by definition of the set $\aee_i'$. By Lemma~\ref{lemma:always}, each agent must either be a marginal contribution agent, i.e., uses $\fmc$ at all resources, or a Gairing agent, i.e., uses $\fgair$ at all resources.  Since, $i \in {\cal I}$ and $a^{\rm ne}$ is an equilibrium, agent $i$ must be a marginal contribution agent, i.e., 
	$$ U_i(a^{\rm ne}_{i} = \tilde{r}, a^{\rm ne}_{-i}) = v_{\tilde{r}} \cdot \fmc(|a^{\rm ne}|_{\tilde{r}}) \geq v_{r} \cdot \fmc(|a^{\rm ne}_{-i}|_{r}+1), $$
	where the inequality follows from the equilibrium conditions.  	Suppose by contradiction that $|a_{-i}^{\rm ne}|_{r}=0$.  In this case we have 
	$$ v_{r} \cdot \fmc(|a^{\rm ne}_{-i}|_{r}+1) = v_{r} \cdot \fgair(|a^{\rm ne}_{-i}|_{r}+1). $$
	which follows from the fact that $f^{\rm mc}(1)=\fgair(1)=1$. Since $\fgair(k) \geq \fmc(k)$ for all $k\geq 1$, this implies 
	$$ v_{\tilde{r}} \cdot \fgair(|a^{\rm ne}|_{\tilde{r}}) \geq v_{\tilde{r}} \cdot \fmc(|a^{\rm ne}|_{\tilde{r}}) \ge v_{r} \cdot \fgair(|a^{\rm ne}_{-i}|_{r}+1). $$
	Hence, $r \notin H_i$ leading to the contradiction.  This completes the proof.  
$\hfill\Box$

\pad
We exploit the result of Lemma~\ref{lem:v} to prove Step 2.
	Towards this goal, we construct an allocation $a \in \aee'$ that satisfies $W(a;G') = W(a^{\rm opt};G)$ where $\aopt \in\arg\max_{a\in\mathcal{A}}W(a;G)$.  We begin with an initial allocation $a$ where for each agent $i \in N$ 
	\[
	a=
	\begin{cases}
	\aopt_i \quad &\text{if } \aopt_i\in \mathcal{A}'_i\,,\\
	\emptyset \quad &\text{else}\,,
	\end{cases}
	\]
	That is, we assign each agent the agent's optimal allocation choice if it is available to them in the new action set $\aee'_i$. If all agents received their optimal choice, then the proof is complete.

	If this is not the case, then there will be a set of uncovered resources $\mathcal{U}(a)=\{r \in a^{\rm opt} : |a|_r = 0\}$ which we denote by $\mathcal{U}(a) = \{r_1, \dots, r_m\}$. We will now argue that we can construct a new allocation $a'$ that covers one additional resource from the set $\mathcal{U}(a)$, i.e., $\card{\mathcal{U}(a')} = \card{\mathcal{U}(a)} - 1$ and $a \subseteq a'$, where we denote with $\card{\mathcal{U}(a)}$ the cardinality of $\mathcal{U}(a)$.
	
	To that end, consider any uncovered resource $r_0 \in \mathcal{U}(a)$. By definition, there exists an agent $i_0 \in N$ such that $a_{i_0}^{\rm opt} = r_0$ but $r_0 \notin \aee_{i_0}'$. Consequently, we have that $r_0 \in H_{i_0}$ and by Lemma~\ref{lem:v} we know that there exists an agent $i_1 \neq i_0$ such that $a_{i_1}^{\rm ne}=r_0$.  Since $a_{i_1}^{\rm ne}=r_0$ we also have that $r_0 \in \aee_{i_1}'$ by definition.  We now analyze the following three cases:

	\vspace{.2cm}
	
	\noindent -- Case 1: Suppose $a_{i_1} = \emptyset$. Then define a new allocation $a'_{i_1} = r_0$ and $a'_{j} = a_{j}$ for all $j \neq i_1$, and we are done.

	\vspace{.2cm}
	
	\noindent -- Case 2: Suppose $a_{i_1} = r_1$ and $|a^{\rm ne}|_{r_1} = 0$, meaning that there are no agents at the resource $r_1$ in the equilibrium allocation. Then by definition $r_1 \in B_{i_0}$ and $r_1\in \aee_{i_0}'$.  Define the allocation $a'_{i_0} = r_1$, $a'_{i_1} = r_0$, and $a_j' = a_j$ for all $j \neq i_0, i_1$ and we are done.

	\vspace{.2cm}
	
	\noindent -- Case 3: Suppose $a_{i_1} = r_1$ and $|a^{\rm ne}|_{r_1} > 0$, meaning that there are agents at the resource $r_1$ in the equilibrium allocation. Select any agent $i_2$ such that $a_{i_2}^{\rm ne} = r_1$. 
	\begin{enumerate}
		\item[(i)] If $a_{i_2} = \emptyset$, then consider the allocation $a'_{i_1} = r_0$, $a'_{i_2} = r_1$ and $a_j' = a_j$ for all $j \neq i_1, i_2$ and we are done.
		\item[(ii)] Otherwise, if $a_{i_2} = r_2$, then let $a'_{i_1} = r_0$, $a'_{i_2} = r_1$, and repeat Case 2 or Case 3 depending on whether $|a^{\rm ne}|_{r_2} = 0$ or $|a^{\rm ne}|_{r_2} > 0$.  Note that Case~3-(ii) can be repeated at most $n$ iterations until an alternative case that terminates is reached.  To see this, note that each time an agent is given a new choice in this process, i.e., $a_{i} \rightarrow a_i' \neq a_{i}$, the agent's new choice is the agent's equilibrium choice, i.e., $a_i' = a^{\rm ne}_i$.  Therefore, once an agent is assigned a new choice, the agent will never be reassigned in this process. 
	\end{enumerate}
	
	Starting from $a$ as defined above, the above process results in a new allocation $a'$ that satisfies $\card{\mathcal{U}(a')} = \card{\mathcal{U}(a)} - 1$ and $a \subseteq a'$.  As with the allocation $a$, the allocation $a'$ satisfies $\max_r |a'|_r = 1$ and $a' \subseteq a^{\rm opt}$. If $a' = a^{\rm opt}$, we are done.  Otherwise, we can repeat the process depicted above to generate a new allocation $a''$ such that $\card{\mathcal{U}(a'')} = \card{\mathcal{U}(a')} - 1$ as nowhere in the process did we rely on the fact that $a_i = a_i^{\rm opt}$.  Repeating these arguments recursively provides the result. $\hfill\Box$

\section{Conclusions}

How should a system operator design a networked architecture?  The answer to this question is non-trivial and involves weighing the advantages and disadvantages associated with different design choices.  In this paper we highlight one novel trade-off pertaining to the worst-case and best-case performance guarantees in distributed maximum coverage problems with local information. Further, we demonstrate how a system designer can move beyond these trade-offs by equipping the agents with additional information about the system.   
Fully realizing the potential of multiagent systems requires the pursuit of a more formal understanding of the inherent limitations and performance trade-offs associated with networked architectures. While this paper focused purely on two performance measures, other metrics of interest include convergence rates, robustness to adversaries, and fairness. In each of these settings, it is imperative that a system operator fully understands the role of information within these trade-offs.  Only then, will a system operator be able to effectively balance the potential performance gains with the communication costs associated with propagating additional information through the system.

\section{Appendix}
\subsection{\DP{Proof of Theorem~\ref{t2}}}
\DP{This section is dedicated to the proof of Theorem~\ref{t2}.  We will prove the result on the price of stability through a series of intermediate lemmas. We begin by observing that any game $G$ in the class $\gee_f^n$ is a congestion game, and thus is a potential game as introduced in  \cite{Monderer96}, with a potential function $\phi:\aee \rightarrow \arr$ of the form\footnote{A proof of this can be found in \cite{Monderer96}, where the potential function $\phi$ is also defined.} 
\begin{equation}\label{eq:potential}
\phi(a) = \sum_{r\in\mathcal{R}}\sum_{j=1}^{|a|_r} v_r  f(j).
\end{equation}
Further recall that an equilibrium is guaranteed to exist in any potential game \cite{Monderer96}, and one such equilibrium is the allocation that maximizes the potential function $\phi$, i.e., $\ane \in \arg \max_{a \in \aee} \phi(a)$.}

To prove \eqref{eq:pos2}, we restrict our attention to the set of single-selection covering games, where the optimal allocation is disjoint, i.e., $a_i^{\rm opt} \neq a_j^{\rm opt}$, for any $i \neq j$.  We denote such games by the set $\bar{\gee}_f^n \subset \gee_f^n$. Our first lemma, stated without proof for brevity, demonstrates that restricting attention to single-selection covering games where the optimal allocation is disjoint is sufficient for characterizing the price of stability in single-selection covering games.  

\begin{lemma}
	\DP{Let $\bar{\gee}_f^n$ denote the set of $n$-agent single-selection maximum coverage games with distribution rule $f \in {\cal F}$}.  %
	It holds ${\rm PoS}(\bar{\gee}_f^n )={\rm PoS}({\gee}_f^n)$.
\end{lemma}

We will now proceed with a series of claims to demonstrate that ${\rm PoS}(\bar{\gee}_f^n)$ satisfies  \eqref{eq:pos2} with equality. %
The central part of the proof involves focusing on the equilibrium which maximizes the potential function in \eqref{eq:potential} in the considered class of games $\bar{\gee}^n_f$.  From this specific equilibrium, we consider a sequence of allocations taking the form $a^0 = \ane$ and $a^k = (\aopt_{i(k)}, a_{-i(k)}^{k-1})$ for all $k \in \{1, \dots, m\}$ where $i(k)$ is the deviating player in the $k$-th profile.  \DP{That is, any two adjacent allocations $a^{k-1}$ and $a^k$ differ by at most one agent's choice, namely agent $i(k)$ that switches from $a^{\rm ne}_{i(k)}$ to $a^{\rm opt}_{i(k)}$. In the case that $a^{\rm ne}_{i(k)} = a^{\rm opt}_{i(k)}$, the allocations are the same, i.e.,  $a^k = a^{k+1}$.}  The selection of the deviating players ${\cal I} = \{i(1), \dots, i(m)\}$ is chosen according to the following rules:  
\begin{enumerate}
	\item[(i)]  Let $i(1) \in N$ be any arbitrary player. 
	\item[(ii)] For each $k \geq 1$, if $\aopt_{i(k)} =\ane_{i(1)}$ or $\aopt_{i(k)} \notin \ane$ then the sequence is terminated.
	\item [(iii)] Otherwise, let $i(k+1)$ be any agent in the set $\{j\in N\,:\,a^k_j=\aopt_{i(k)}\}$ and repeat.
\end{enumerate}

\DP{The sequence of allocations defined above is merely employed as a convenient mathematical formulation to derive a relationship between $W(\ane)$ and $W(\aopt)$.  The first part of this derivation is provided in the following lemma.  }

\begin{lemma}\label{cl:1}
	Define $Q = \cup_{i \in {\cal I}} \ \ane_i$ and $\bar{Q} = \cup_{i \in {\cal I}} \ \aopt_i$. Then 
	\begin{equation} \label{eq:sum}
	\sum_{i \in {\cal I}} U_i(\ane) \geq \sum_{r \in Q \cap \bar{Q}} v_r f(|\ane|_r) + \sum_{r \in \bar{Q} \setminus Q} v_r.
	\end{equation}
\end{lemma}

{\it Proof of Lemma \ref{cl:1}}
	We begin with two observations on the above sequence of allocations: (a) the sequence of allocations can continue at most $n$ steps due to the disjointness of $\aopt$  and (b) if the sequence continues, it must be that %
	$\ane_{i(k+1)}=\aopt_{i(k)}$.  Observation (b) ensures us that 
	\begin{equation}\label{eq:sumtozero}
	\psi = \sum_{k=1}^{m-1} U_{i(k+1)}(a^k) - U_{i(k)}(a^k)  = 0. 
	\end{equation}
	Accordingly, we have that 
	\begin{eqnarray*}
		\phi(a^0) - \phi(a^m) &=&
		\sum_{k=0}^{m-1}\phi(a^k)-\phi(a^{k+1})\\
		&=& \sum_{k=0}^{m-1} U_{i(k+1)}(a^k) - U_{i(k+1)}(a^{k+1})   \\
		&=&U_{i(1)}(a^0)-U_{i(m)}(a^m) + \psi \\
		&=& U_{i(1)}(a^0)-U_{i(m)}(a^m)\ge0\,. 
	\end{eqnarray*}
	The first and third equalities follow by rearranging the terms. The second equality can be shown using the definition of $\phi$ as in~\eqref{eq:potential}; the last equality follows by \eqref{eq:sumtozero}.
	The inequality derives from the fact that $a^0=\ane$ optimizes the potential function.  Thanks to observation (b), one can show that 
	\[
	Q\setminus\bar Q = \ane_{i(1)}\setminus\aopt_{i(m)}
	\quad\text{and}\quad
	\bar Q\setminus Q = \aopt_{i(m)}\setminus \ane_{i(1)}\,.
	\]
	If $Q \setminus \bar{Q} \neq \emptyset$, it must be that $\ane_{i(1)}\neq \aopt_{i(m)}$ so that $Q\setminus\bar Q=\ane_{i(1)}$ and $\bar Q\setminus Q=\aopt_{i(m)}$\,.
	Using $\ane_{i(1)}\neq \aopt_{i(m)}$ in condition (ii), tells us that $\aopt_{i(m)}\notin \ane$ and thus the resource $\aopt_{i(m)}$ is not chosen by anyone else in the allocation $a^m$. Thus, when $Q \setminus \bar{Q} \neq \emptyset$,
	\begin{equation}
	U_{i(1)}(a^0) - U_{i(m)}(a^{m}) = \sum_{r \in Q \setminus \bar{Q}} v_r f(|\ane|_r) - \sum_{r \in \bar{Q} \setminus Q} v_r\ge 0.
	\end{equation}
	When $Q \setminus \bar{Q} = \emptyset$, also $\bar Q \setminus {Q} = \emptyset$ and thus the previous inequality still holds.
	Rearranging the terms and adding $\sum_{r \in Q \cap \bar{Q}} v_r f(|\ane|_r)$ to each side gives us
	\begin{equation}\label{eq:intermediate}
	\sum_{r \in Q} v_r f(|\ane|_r) \geq \sum_{r \in Q \cap \bar{Q}} v_r f(|\ane|_r) + \sum_{r \in \bar{Q} \setminus Q} v_r\,.
	\end{equation}
	Finally note that 
	\begin{eqnarray*}
		\sum_{i \in {\cal I}} U_i(\ane) \ge \sum_{r \in Q} v_r f(|\ane|_r)
	\end{eqnarray*}
	which together with \eqref{eq:intermediate} completes the proof.  
	$\hfill \Box$

Our next lemma shows that there exists a collection of disjoint sequences that covers all players in $N$. We will express a sequence merely by the deviating player set ${\cal I}$ with the understanding that this set uniquely determines the sequence of allocations.

\begin{lemma}\label{cl:2}
	There exists a collection of deviating players ${\cal I}^1, \dots, {\cal I}^p$ chosen according to the process described above such that $\cup_{k}\,{\cal I}^k = N$ and ${\cal I}^j \cap {\cal I}^k = \emptyset$ for any $j \neq k$.    
\end{lemma}

{\it Proof of Lemma \ref{cl:2}}
	Suppose ${\cal I}^1, {\cal I}^2, \dots, {\cal I}^{k}$ represent the first $k$ sequences of deviating players. Further assume that they are all disjoint. Choose some player $i \in N \setminus \cup_k {\cal I}^k$ to start the $(k+1)$-th sequence.  If no such player exists, we are done. Otherwise, construct the sequence according to the process depicted above. If the sequence terminates without selecting a player in $\cup_k {\cal I}^k$, then repeat this process to generate the $(k+2)$-th sequence.  Otherwise, let $i^{k+1}(j)$, $j\geq 2$, denote the first player in the $(k+1)$-th sequence contained in the set $\cup_k {\cal I}^k$.  Since $\aopt_i\neq\aopt_j$ (for $i\neq j$), this player must be %
	 the first player in a previous sequence.  Suppose this player is $i^\ell(1)$, where $\ell \in \{1, \dots, k\}$.  If this is the case, replace the $\ell$-th sequence with $\{i^{k+1}(1), \dots, i^{k+1}(j-1), {\cal I}^\ell\}$ which is a valid sequence and disjoint from the others. Then repeat the process above to choose the $(k+1)$-th sequence.  Note that this process can continue at most $n$-steps and will always result in a collection of disjoint sequences that cover all players in $N$.  This completes the proof. 
	$\hfill \Box$

In the following we complete the proof of Theorem~\ref{t2}, by means of Lemmas \ref{cl:1} and \ref{cl:2}.

{\it Proof of Theorem~\ref{t2}:}
	We being showing a lower bound on the price of stability.  Let ${\cal I}^1, \dots, {\cal I}^p$ denote a collection of deviating players that satisfies Lemma~\ref{cl:2}.  Further, let $Q^k$ and $\bar{Q}^k$ be defined as above for each sequence $k=1, \dots, p$.   Using the result \eqref{eq:sum} from Lemma~\ref{cl:1}, we have
	\begin{eqnarray*} 
		\sum_{ i \in N} U_i(\ane) 
		& = & \sum_{k=1}^p \sum_{i \in {\cal I}^k} U_i(\ane) \\ 
		& \geq & \sum_{k=1}^p \left(\sum_{r \in Q^k \cap \bar{Q}^k}  v_r f(|\ane|_r) + \sum_{r \in \bar{Q}^k \setminus Q^k} v_r \right) \\
		& = & \sum_{r \in \aopt \cap \ane} v_r f(|\ane|_r) + \sum_{r \in \aopt \setminus \ane} v_r, 
	\end{eqnarray*}
	where the above equality follows from the fact that $\bar{Q}^i \cap \bar{Q}^j = \emptyset$ for any $i \neq j$ which is due to the disjointness of $\aopt$. Using the definition of $U_i(\ane)$, we have
	$$\sum_{r \in \ane \setminus \aopt}\!\!\!\!\!\!\!\! v_r |\ane|_r f(|\ane|_r) + \!\!\!\!\!\!\!\! \sum_{r \in \ane \cap \aopt}\!\!\!\!\!\!\!\!\!v_r \left(|\ane|_r -1 \right) f(|\ane|_r)
	\geq\!\!\!\!\!\!\!\! \sum_{r \in \aopt \setminus \ane}\!\!\!\!\!\!\! v_r.
	$$
	Define $\gamma = \max_{j \leq n} (j-1)f(j)$.  Working with the above expression we have
	\begin{equation*} 
	\sum_{r \in \ane \setminus \aopt} v_r (\gamma + 1) + \sum_{r \in \ane\cap \aopt}v_r \gamma
	\geq \sum_{r \in \aopt \setminus \ane} v_r,
	\end{equation*}
	which gives us that 
	\begin{equation*} 
	(\gamma + 1) W(\ane) \geq W(\aopt)
	\end{equation*}
	which completes the lower bound.  
	
	We will now provide an accompanying upper bound on the price of stability.  To that end, consider a family of examples parameterized by a coefficient $j \in \{1, \dots, n \}$.  For each $j$, the game consists of $j$ agents and $(j+1)$-resources $\ree = \{r_0, r_1, \dots, r_j\}$ where the values of the resources are $v_{r_0} = 1$ and $v_{r_1} = \dots = v_{r_j} = f(j)-\varepsilon$ where $\varepsilon>0$ is an arbitrarily small constant, and the action set of each agent $i \in \{1, \dots, j\}$ is $\aee_i =\{r_0, r_i\}$. The unique equilibrium  is of the form $\ane = (r_0, \dots, r_0)$ as every agent selects resource $r_0$ and the total welfare is $W(\ane) = 1$.  The optimal allocation is of the form $\aopt = (r_0, r_2, \dots, r_j)$ which generates a total welfare of $W(\aopt)=1+(j-1)(f(j)-\varepsilon)$.  Performing a worst case analysis over $\varepsilon$ and $j$ gives \eqref{eq:pos2}.
The claim in \eqref{eq:981} follows from observing that $\fmc$ is always optimal, and returns a price of stability of $1$, thanks to Proposition~\ref{claim:mc-share}.
\qed

\bibliographystyle{plain}
\bibliography{scibib_modified} %

\begin{IEEEbiography}[{\includegraphics[width=1in,height=1.25in,clip,keepaspectratio]{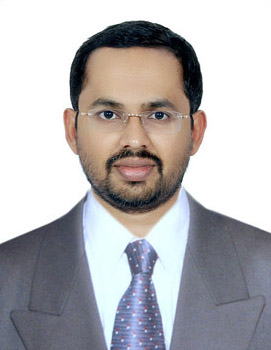}}]
	{Vinod Ramaswamy} received his B.Tech in electronics and communication engineering from Kerala University, India, in 2004 and his M.Tech. degree in electrical engineering, with specialization in communication systems, from Indian Institute of Technology, Madras, India, in 2006. He received his Ph.D. degree in Electrical Engineering from Texas A\&M University, College Station in 2013. From 2015 to 2016, he was a Postdoctoral Scholar at University of Colorado, Boulder. He is currently working in Qualcomm. His research interests include game theory, multi-agent systems and communication networks.
	
\end{IEEEbiography}
\vspace{-.5in}

\begin{IEEEbiography}[{\includegraphics[width=1in,height=1.25in,clip,keepaspectratio]{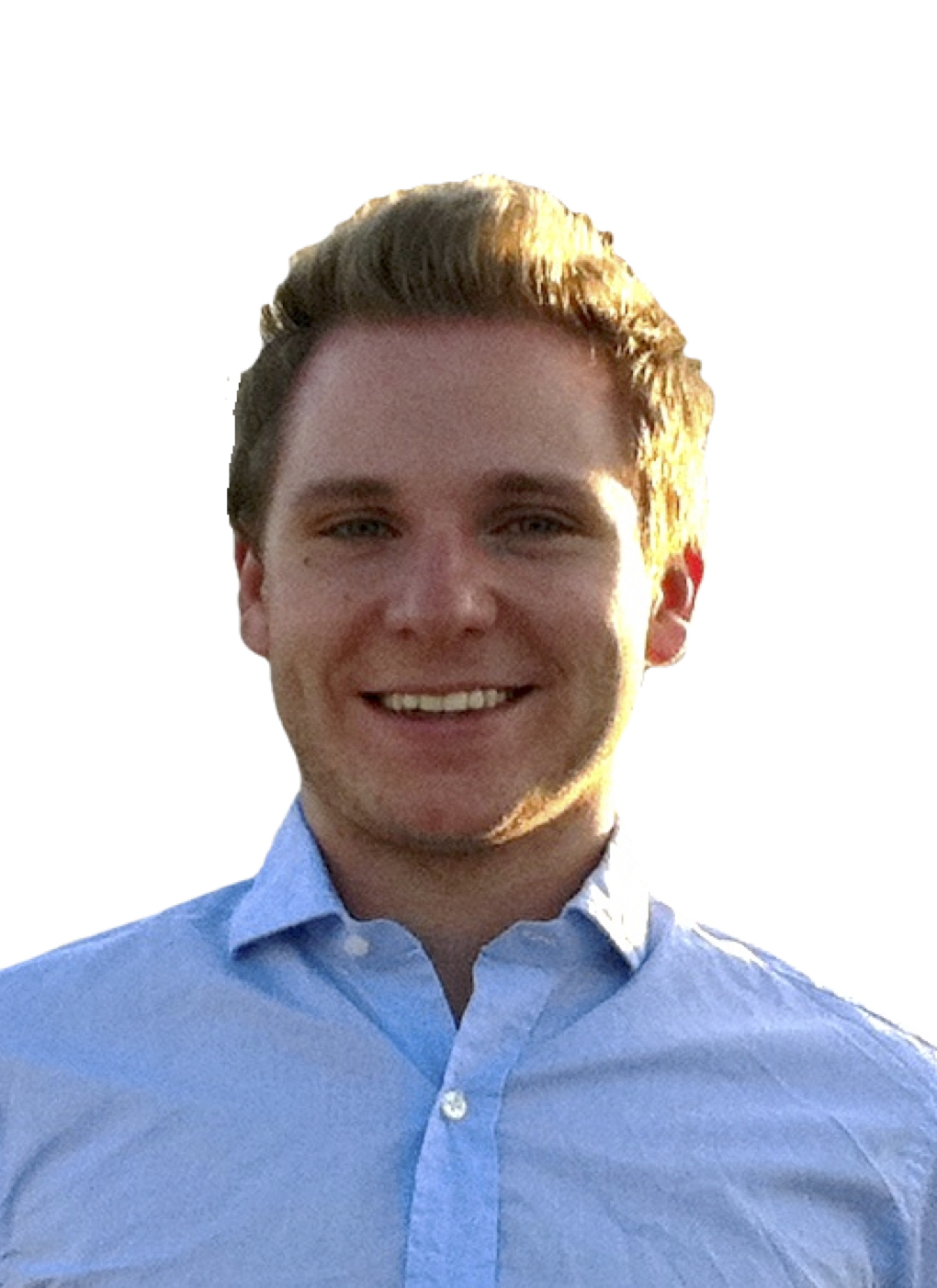}}]
	{Dario Paccagnan} is a Postdoctoral Fellow with the Mechanical Engineering Department and the Center for Control, Dynamical Systems and Computation, University of California, Santa Barbara.
In 2018, Dario obtained a Ph.D. degree from the Information Technology and Electrical Engineering Department, ETH Z\"{u}rich, Switzerland. He received a B.Sc. and M.Sc. in Aerospace Engineering in 2011 and 2014 from the University of Padova, Italy, and a M.Sc. in Mathematical Modelling and Computation from the Technical University of Denmark in 2014; all with Honors.
Dario was a visiting scholar at the University of California, Santa Barbara in 2017, and at Imperial College of London, in 2014.
His interests are at the interface between control theory and game theory, with a focus on the design of behavior-influencing mechanisms for socio-technical systems. Applications include multiagent systems and smart cities. Dr. Paccagnan was awarded the ETH medal, and is recipient of the SNSF fellowship for his work in Distributed Optimization and Game Design. 
\end{IEEEbiography}
\vspace{-.5in}

\begin{IEEEbiography}[{\includegraphics[width=1in,height=1.25in,clip,keepaspectratio]{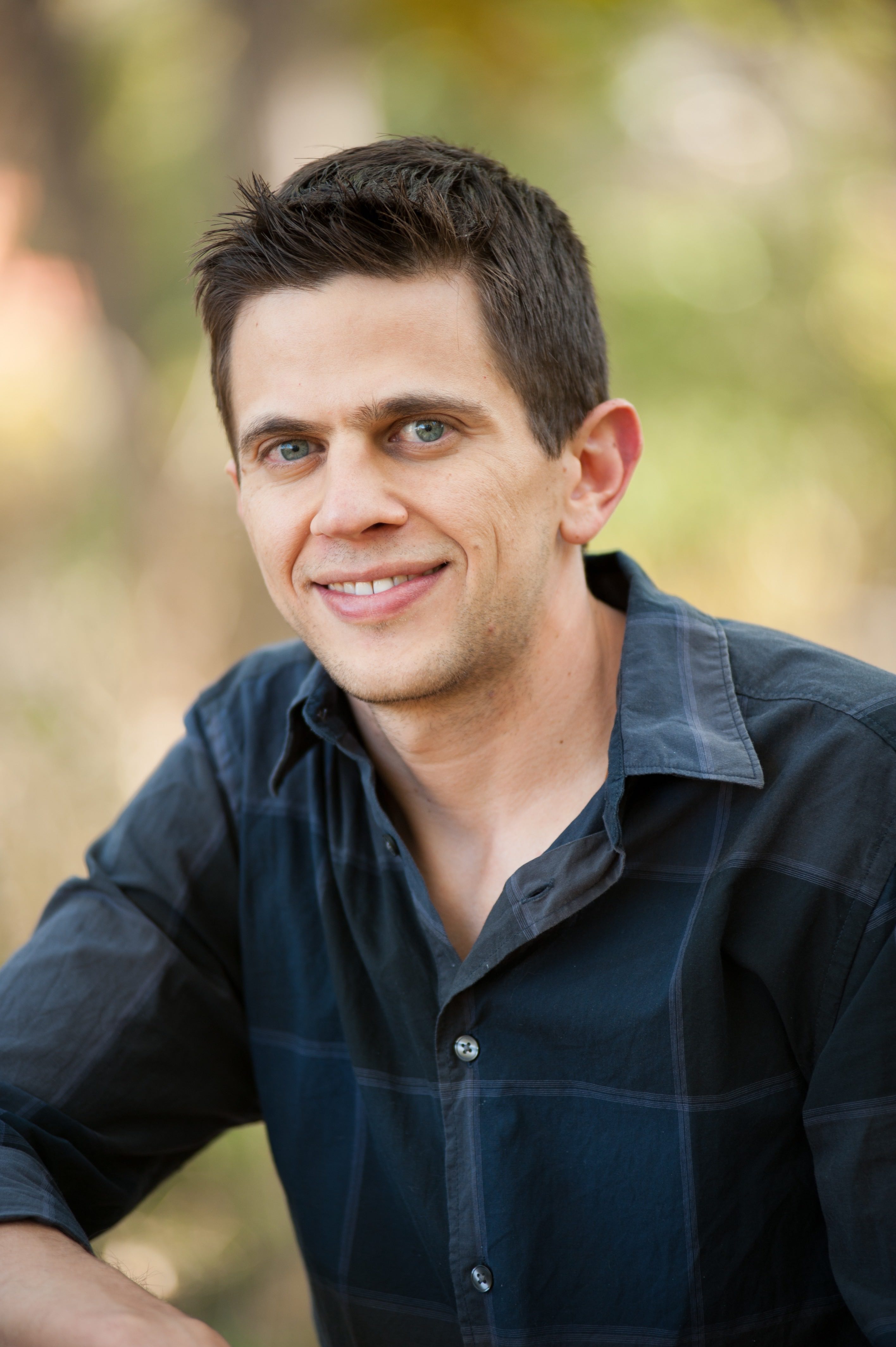}}]{Jason R. Marden} is an Associate Professor in the Department of Electrical and Computer Engineering at the University of California, Santa Barbara.  Jason received a BS in Mechanical Engineering in 2001 from UCLA, and a PhD in Mechanical Engineering in 2007, also from UCLA, where he was awarded the Outstanding Graduating PhD Student in Mechanical Engineering. After graduating from UCLA, he served as a junior fellow in the Social and Information Sciences Laboratory at the California Institute of Technology until 2010 when he joined the University of Colorado. In 2015, Jason joined the University of California, Santa Barbara.  Jason is a recipient of the NSF Career Award (2014), the ONR Young Investigator Award (2015), the AFOSR Young Investigator Award (2012), the American Automatic Control Council Donald P. Eckman Award (2012), and the SIAG/CST Best SICON Paper Prize (2015).   Jason's research interests focus on game theoretic methods for the control of distributed multi-agent systems.
\end{IEEEbiography}

\end{document}